\documentclass[journal]{IEEEtran}
\usepackage[utf8]{inputenc}
\usepackage{cite}
\usepackage{amsmath,amssymb,amsfonts,amsthm}
\usepackage{graphicx}
\usepackage{algorithm}
\usepackage{algorithmic}
\usepackage {multirow}
\usepackage{bm}
\usepackage{textcomp}
\usepackage{url}
\usepackage{mathtools}
\usepackage{xcolor}
\usepackage{stfloats}
\usepackage{placeins}
\usepackage{booktabs} % For professional tables
\usepackage{tabularx}
\usepackage{balance}
\newtheorem{theorem}{Theorem}

\newtheorem{proposition}{Proposition}
\newtheorem{definition}{Definition}

\newtheorem{remark}{Remark}
\begin{document}

\title{Generalized Linear Graph Representation: A Compact Operator Space for Graph Signal Processing and Graph Neural Networks}

\author{Feiyue~Zhao, Zhichao~Zhang,~\IEEEmembership{Member,~IEEE}, and Yangfan~He
\vspace{-0.25em} 
\thanks{This work was supported in part by the Open Foundation of Hubei Key Laboratory of Applied Mathematics (Hubei University) under Grant HBAM202404; in part by the Foundation of Key Laboratory of System Control and Information Processing, Ministry of Education under Grant Scip20240121; and in part by the Startup Foundation for Introducing Talent of Nanjing Institute of Technology under Grant YKJ202214. \emph{(Corresponding author: Zhichao~Zhang.)}}
\thanks{Feiyue~Zhao is with the School of Mathematics and Statistics, Nanjing University of Information Science and Technology, Nanjing 210044, China (e-mail: 202511150010@nuist.edu.cn).}
\thanks{Zhichao~Zhang is with the School of Mathematics and Statistics, Nanjing University of Information Science and Technology, Nanjing 210044, China, with the Hubei Key Laboratory of Applied Mathematics, Hubei University, Wuhan 430062, China, and also with the Key Laboratory of System Control and Information Processing, Ministry of Education, Shanghai Jiao Tong University, Shanghai 200240, China (e-mail: zzc910731@163.com).}
\thanks{Yangfan~He is with the School of Communication and Artificial Intelligence, School of Integrated Circuits, Nanjing Institute of Technology, Nanjing 211167, China, and also with the Jiangsu Province Engineering Research Center of IntelliSense Technology and System, Nanjing 211167, China (e-mail: Yangfan.He@njit.edu.cn).}}

\markboth{}{}
\maketitle
% \vspace{-5pt}
\begin{abstract}
Graph Signal Processing (GSP) and Graph Neural Networks (GNNs) rely fundamentally on the matrix representation of the underlying graph topology. This representation defines key operators such as the graph Fourier transform, spectral filtering, and convolution. Existing parameterized operator families interpolate only partial subsets of classical graph matrices, while broader formulations become non-compact when representing transition-type operators, limiting both theoretical analysis and stable learning. To address this issue, we propose the Generalized Linear Graph Representation (GLGR), denoted by $\mathbf{Q}_{\alpha,l}$, as a compact two-parameter operator family defined on a bounded linear domain. GLGR unifies major classical operators together with transition-type operators without requiring asymptotic parameters. Theoretically, we show that $\mathbf{Q}_{\alpha,l}$ admits a variational decomposition balancing local smoothness and global degree-weighted energy, derive spectral perturbation bounds, and establish graph-aware sufficient conditions for positive semi-definiteness. Building on this formulation, we develop Adaptive GLGR Convolution (AG-Conv), which makes the propagation operator itself learnable within end-to-end GNNs. Experiments on graph classification and node classification benchmarks show that GLGR improves both fixed-operator representation search and adaptive graph learning across multiple backbones. 

\end{abstract}

\begin{IEEEkeywords}
Graph Signal Processing, Graph Neural Networks, Graph Representation, Spectral Graph Theory, Matrix Perturbation Theory.
\end{IEEEkeywords}

\section{Introduction}

\IEEEPARstart{G}{raph}-structured data are now central to scientific computing and machine intelligence, spanning social systems, biological interaction networks, citation graphs, and molecular structures. In this context, Graph Signal Processing (GSP) provides the mathematical foundation for extending classical signal operations to irregular domains \cite{ortega2018graph,shuman2013emerging,Sandryhaila2013,isufi2024graph,dong2020graph}. A core concept in GSP is the \textit{graph shift operator} (GSO), because the selected matrix representation determines the graph Fourier basis, spectral filtering behavior, and ultimately the inductive bias of downstream models.

\begin{figure}[t]
    \centering
    \includegraphics[width=0.95\linewidth]{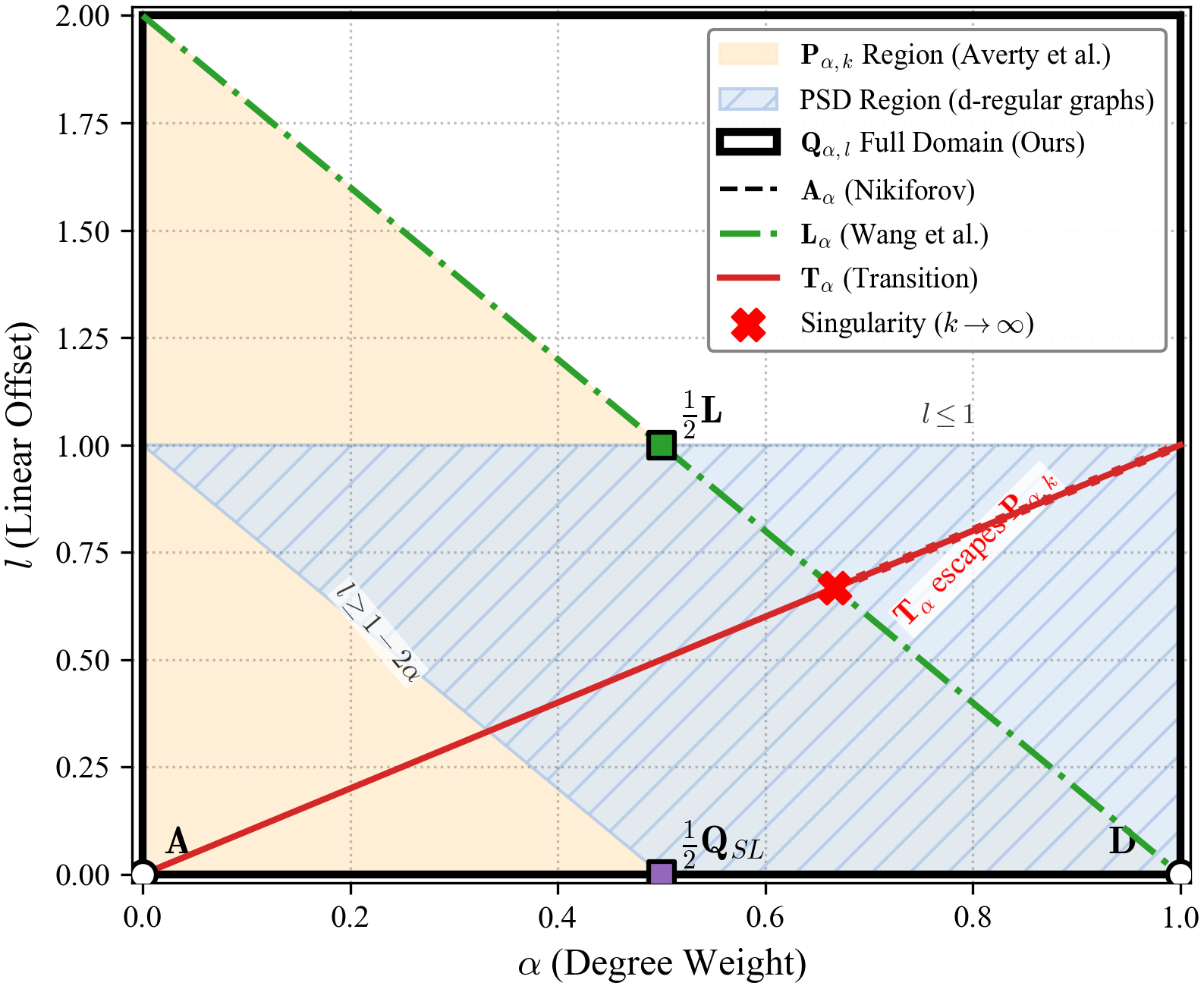}
    \caption{The parameter space of the Generalized Linear Graph Representation ($\mathbf{Q}_{\alpha, l}$). The shaded blue area indicates an illustrative PSD-guaranteed region obtained from the simplified $d$-regular-graph analysis in Section~IV; the exact PSD domain is graph dependent. The diagram highlights the unification of standard matrices ($\mathbf{A}, \mathbf{D}, \mathbf{L}, \mathbf{Q}_{SL}$) and parametric families ($\mathbf{A}_{\alpha}, \mathbf{L}_{\alpha}, \mathbf{T}_{\alpha}$) within the linear $(\alpha, l)$ domain.}
    \label{fig:parameter_space}
\end{figure}

This choice is equally fundamental in Graph Neural Networks (GNNs), whose development has evolved from geometric deep learning and early spectral convolution layers to expressive attention-based and benchmarked modern pipelines \cite{bronstein2017geometric,kipf2017semi,velivckovic2017graph,wu2020comprehensive,dwivedi2023benchmarking}. Although many modern GNN layers are presented in message-passing form, their propagation rules can be interpreted as polynomial filters of an underlying shift operator. Therefore, the matrix representation is not only a theoretical object in GSP but also a practical design variable controlling smoothing strength, high-frequency retention, and optimization dynamics in deep graph models. In particular, homophilous graphs typically benefit from smooth low-pass propagation, whereas heterophilous graphs often require preserving or amplifying higher-frequency components, a distinction closely connected to over-smoothing and low-pass bias analyses \cite{li2018deeper,oono2020graph,nt2019revisiting}.

Existing representations, however, remain fragmented. The adjacency-centered view from algebraic signal processing \cite{Sandryhaila2013} and the Laplacian-centered view from spectral geometry \cite{shuman2013emerging} capture complementary but incompatible priors. Parametric families such as $\mathbf{A}_{\alpha}$ \cite{Nikiforov2017} and $\mathbf{L}_{\alpha}$ \cite{Wang2020} partially bridge specific operator pairs, yet they do not provide a unified, compact, and optimization-friendly space covering both schools. More recently, $\mathbf{P}_{\alpha,k}$ \cite{Averty2024} aims to enlarge this space, but its non-linear parameter coupling introduces a structural singularity: representing transition-type operators requires $k \rightarrow \infty$. Such non-compactness is especially problematic for learnable architectures, where bounded and well-conditioned parameter domains are crucial for stable gradient-based training.

From a broader perspective, operator design is deeply coupled with graph construction, regularization, and spectral prior modeling. Representative directions include graph kernels and regularization, manifold/Laplacian embedding, trend filtering, smooth-signal graph learning, heat-diffusion graph identification, and vertex-domain graph operations \cite{smola2003kernels,belkin2003laplacian,wang2016trend,kalofolias2016learn,thanou2017learning,kartal2021graph}. Complementary studies on focused spectral analysis, sparse matrix spectral ordering, and random geometric graph regimes further highlight that topology-dependent spectra can vary substantially across graphs \cite{vandeville2017when,barnard1995spectral,penrose2003random}. Recent theoretical and application-driven GSP advances also reinforce the need for robust and adaptive graph operators \cite{song2024graph,morgenstern2024theoretical,buchnik2024gsp,castro2024gegenbauer}.

\begin{figure*}[t]
    \centering
    \includegraphics[width=\linewidth]{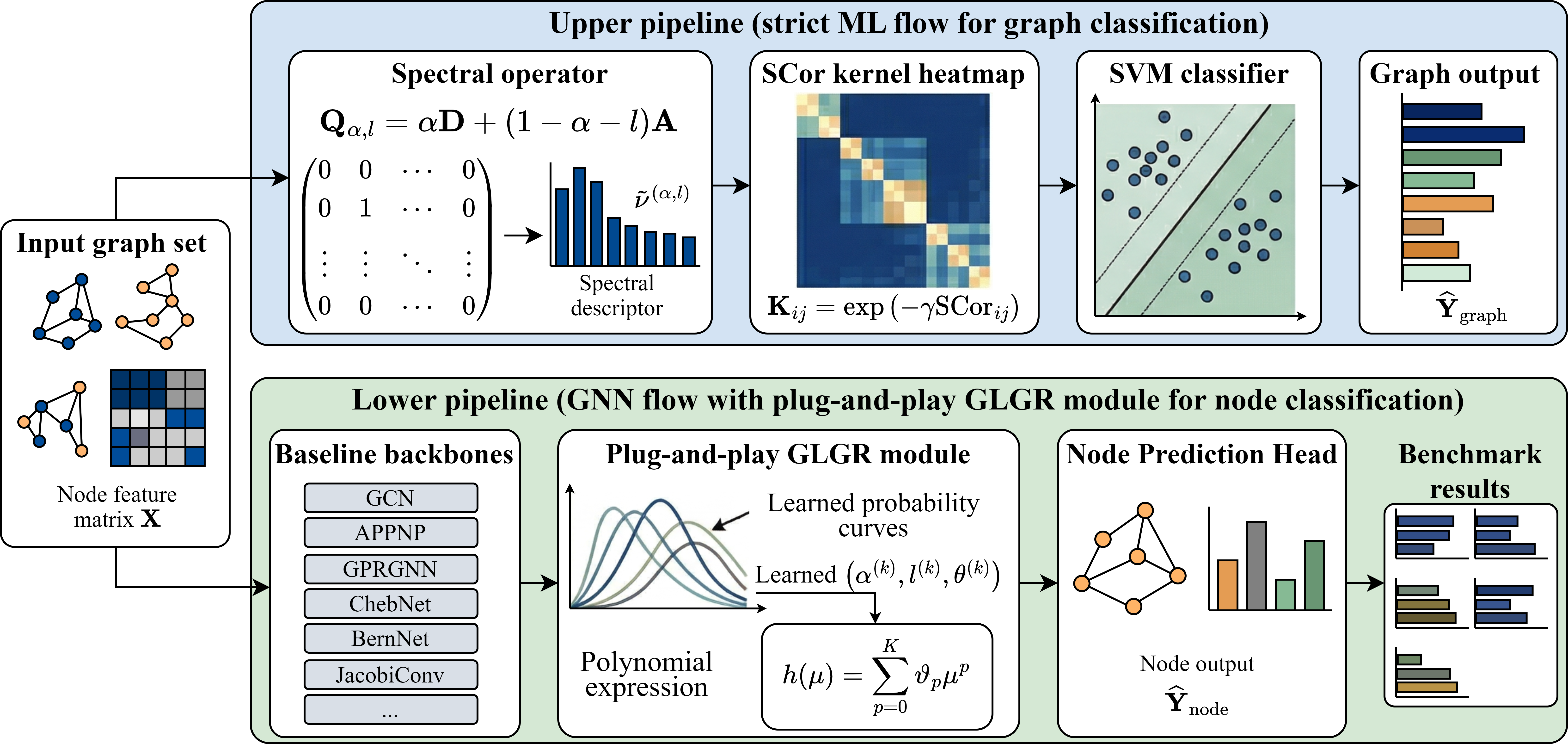}
    \caption{Two complementary experimental tracks in this study. \textbf{Upper branch (graph classification):} a fixed-parameter representation-space pipeline based on GLGR spectral descriptors, SCor-based kernel construction, and SVM classification. \textbf{Lower branch (node classification):} a learnable-parameter pipeline where Adaptive GLGR Convolution (AG-Conv) is plugged into baseline backbones for end-to-end node-level prediction.}
    \label{fig:glgr_overview}
\end{figure*}

Recent long-range graph architectures further reinforce this point. Graph transformers and graph-adapted state space models enlarge the receptive field and improve global interaction, but their performance remains strongly influenced by structural encoding and propagation bias \cite{dwivedi2021generalization,ying2021do,rampasek2022recipe,wu2022nodeformer,wu2023sgformer,chen2023nagphormer,ma2024polyformer,wang2024graphmamba,behrouz2024graphmamba}. In other words, architectural expressiveness does not remove the need for a well-behaved graph operator; it makes operator choice even more consequential. This shifts the central question from designing yet another architecture around a fixed matrix to constructing a compact and learnable operator space itself.

Motivated by this need for a unified and learnable operator space, we propose the Generalized Linear Graph Representation (GLGR), denoted by $\mathbf{Q}_{\alpha,l}$. The core design is to linearly decouple diagonal and off-diagonal components through an independent offset parameter, so that different structural priors can be controlled without inducing singular behavior. As a result, GLGR forms a bounded and well conditioned operator manifold that can recover classical matrices while remaining suitable for gradient-based optimization. Fig.~\ref{fig:parameter_space} provides a geometric illustration of this compact operator family in the $(\alpha,l)$ plane.

GLGR is intended to serve both spectral analysis and learnable graph propagation. On the GSP side, it yields a unified operator family with clear variational meaning and analyzable stability properties. On the GNN side, the same parameterization can be learned as a propagation operator, allowing the model to adapt its frequency preference across homophilous and heterophilous regimes. Fig.~\ref{fig:glgr_overview} summarizes these two experimental tracks, including an upper graph-classification pipeline based on GLGR spectral descriptors and a lower node-classification pipeline with end-to-end adaptive GLGR integration. The main contributions of this paper are summarized as follows:
\begin{enumerate}
    \item \textbf{Compact Operator Unification in a Bounded Domain:} We propose the GLGR family $\mathbf{Q}_{\alpha,l}$ and show that it places major classical graph operators in a bounded linear domain, while representing transition-type operators without singular limits. This resolves the non-compactness issue of prior non-linear parameterizations.
    \item \textbf{Variational and Stability Theory for GLGR:} We derive a generalized graph energy interpretation of $\mathbf{Q}_{\alpha,l}$ and establish spectral perturbation bounds together with graph-aware sufficient PSD conditions, providing a rigorous explanation of why GLGR is both expressive and numerically stable.
    \item \textbf{Learnable GLGR for Graph Neural Networks:} We develop Adaptive GLGR Convolution (AG-Conv), where $(\alpha,l)$ are learned jointly with network weights. This enables task-dependent frequency adaptation and consistently improves multiple backbones on both homophilous and heterophilous node classification benchmarks.
    \item \textbf{Two-Level Empirical Validation:} Through exhaustive SCor+SVM grid search and end-to-end GNN training, we show that GLGR offers both representation-space gains (better fixed operators) and optimization-space gains (learnable and transferable improvements).
\end{enumerate}

The remainder of this paper is organized as follows. Section II reviews standard operators and formalizes the limitations of existing unified matrices. Section III introduces GLGR and proves its unification properties. Section IV develops the spectral theory of GLGR, including energy decomposition, perturbation stability, and positive semi-definiteness conditions. Section V presents the adaptive GLGR convolution for GNNs. Section VI reports graph and node classification experiments, and Section VII concludes the paper.

\section{Related Work and Problem Formulation}

\subsection{From Spectral Operators to Graph Neural Propagation}
Let $\mathcal{G}=(\mathcal{V},\mathcal{E})$ be an undirected graph with $N=|\mathcal{V}|$ nodes, adjacency matrix $\mathbf{A}$, and degree matrix $\mathbf{D}$ with $\mathbf{D}_{ii}=\sum_j\mathbf{A}_{ij}$. Classical graph operators include the combinatorial Laplacian $\mathbf{L}=\mathbf{D}-\mathbf{A}$ and the signless Laplacian $\mathbf{Q}_{SL}=\mathbf{D}+\mathbf{A}$.

In GSP, a GSO $\mathbf{S}\in\mathbb{R}^{N\times N}$ defines the spectral coordinate system: if $\mathbf{S}=\mathbf{U}\mathbf{\Lambda}\mathbf{U}^\top$, then graph Fourier analysis is induced by $\mathbf{U}$, and spectral filters are $g(\mathbf{S})=\mathbf{U}g(\mathbf{\Lambda})\mathbf{U}^\top$. This perspective is directly aligned with GNNs, where many message-passing layers are equivalent to polynomial filters $\sum_{k=0}^{K}c_k\mathbf{S}^k$. Hence, the choice of $\mathbf{S}$ jointly controls spectral bias, propagation behavior, and optimization characteristics.

\subsection{Existing Unified Families and Their Limitations}
To interpolate among canonical operators, prior work introduced parameterized families. Nikiforov's family $\mathbf{A}_{\alpha}=\alpha\mathbf{D}+(1-\alpha)\mathbf{A}$ \cite{Nikiforov2017} unifies adjacency- and degree-oriented views, while Wang's family $\mathbf{L}_{\alpha}=\alpha\mathbf{D}+(\alpha-1)\mathbf{A}$ \cite{Wang2020} bridges Laplacian and degree operators. These constructions are useful but remain separated: neither provides a single compact framework simultaneously covering both adjacency- and Laplacian-centered representations. A later attempt is
\begin{equation} \label{eq:P_alpha_k}
    \mathbf{P}_{\alpha,k}=\alpha\mathbf{D}+(2k-1)(\alpha-1)\mathbf{A},\quad \alpha,k\in[0,1],
\end{equation}
which enlarges the interpolation space but introduces non-linear parameter coupling. Consider the transition-type operator
\begin{equation}
    \mathbf{T}_{\alpha}=\alpha\mathbf{D}+(1-2\alpha)\mathbf{A}.
\end{equation}
Matching coefficients between $\mathbf{P}_{\alpha,k}$ and $\mathbf{T}_{\alpha}$ yields
\begin{equation}
    k=\frac{1}{2}+\frac{1-2\alpha}{2(\alpha-1)}.
\end{equation}
When $\alpha\to1$, $k\to\infty$. Therefore, transition-type operators lie on an asymptotic boundary rather than in the interior of a compact parameter domain. This non-compactness is problematic in learning scenarios: bounded parameterizations cannot reach the target region, while unconstrained optimization becomes numerically unstable.

\subsection{Problem Formulation and Design Criteria}
The above limitations indicate that a practically useful unified family must satisfy more than algebraic interpolation. We formulate the target as follows: construct a two-parameter operator family $\mathbf{M}(\theta)$ that simultaneously satisfies
\begin{enumerate}
    \item \textbf{Compact coverage:} major classical and transition-type operators are represented within a bounded domain;
    \item \textbf{Optimization well-posedness:} parameterization avoids singular mappings and supports stable gradient flow;
    \item \textbf{GNN compatibility:} the operator can be inserted directly into learnable propagation layers for both homophilous and heterophilous settings.
\end{enumerate}

In other words, we seek a unified representation where spectral interpretability and end-to-end learnability are achieved in the same parameter space. Section III introduces GLGR to meet these criteria.

\section{Generalized Linear Graph Representation}

\subsection{Definition and Geometry}

\begin{definition}[GLGR Operator]
For any undirected graph $\mathcal{G}$, the Generalized Linear Graph Representation operator $\mathbf{Q}_{\alpha, l}$ is defined over the compact domain $(\alpha, l) \in [0, 1] \times [0, 2]$ as:
    \begin{equation} \label{eq:Q_def}
        \mathbf{Q}_{\alpha, l} := \alpha \mathbf{D} + (1 - \alpha - l) \mathbf{A}.
    \end{equation}
\end{definition}
Geometrically, $\mathbf{Q}_{\alpha, l}$ represents a point in the affine subspace spanned by $\mathbf{D}$ and $\mathbf{A}$. The parameter $\alpha$ controls the weight of the self-loops (degree information), while the term $(1 - \alpha - l)$ linearly modulates the edge weights. Crucially, the parameter $l$ acts as a linear offset, allowing the sign and magnitude of the adjacency component to vary smoothly without inducing singularities.

\subsection{Unification of Spectral Spaces}

\begin{theorem}[Topological Completeness]
    The parameter space $\Omega_{\text{GLGR}} = \{(\alpha, l) \mid 0 \le \alpha \le 1, 0 \le l \le 2\}$ provides a compact linear domain that covers the classical operator trajectories induced by $\mathbf{A}_{\alpha}$ and $\mathbf{L}_{\alpha}$, and represents the full transition trajectory $\mathbf{T}_{\alpha}$ without asymptotic parameters. In particular, $\mathbf{T}_{\alpha}$ is realized by the linear path $l=\alpha$ inside $\Omega_{\text{GLGR}}$.
\end{theorem}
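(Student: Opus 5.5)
The plan is to reduce every claim to coefficient matching in the basis $\{\mathbf{D},\mathbf{A}\}$, since every operator involved is of the form $a\mathbf{D}+b\mathbf{A}$. The map $(\alpha,l)\mapsto(\alpha,1-\alpha-l)$ is affine and invertible, with inverse $l=1-a-b$. Hence a target $a\mathbf{D}+b\mathbf{A}$ lies in the family iff $0\le a\le 1$ and $0\le 1-a-b\le 2$. Compactness of $\Omega_{\text{GLGR}}$ is immediate because it is a closed bounded rectangle in $\mathbb{R}^2$. It is worth noting that the operator $\mathbf{Q}_{\alpha,l}$ depends continuously, indeed affinely, on the parameters, so the image of $\Omega_{\text{GLGR}}$ is itself a compact set of operators.

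\textbf{Step 1: the Nikiforov trajectory.} For $\mathbf{A}_\beta=\beta\mathbf{D}+(1-\beta)\mathbf{A}$ with $\beta\in[0,1]$, set $\alpha=\beta$. Matching the $\mathbf{A}$ coefficient gives $1-\beta-l=1-\beta$, so $l=0$. Thus the whole trajectory is the edge $l=0$ of the rectangle.

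\textbf{Step 2: the Wang trajectory.} For $\mathbf{L}_\beta=\beta\mathbf{D}+(\beta-1)\mathbf{A}$, set $\alpha=\beta$. Matching gives $1-\beta-l=\beta-1$, so $l=2-2\beta$. This lies in $[0,2]$ for all $\beta\in[0,1]$: it is a segment from $(0,2)$, which gives $-\mathbf{A}$, to $(1,0)$, which gives $\mathbf{D}$. It passes through $(1/2,1)$, which gives $\tfrac12\mathbf{L}$.

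\textbf{Step 3: the transition trajectory.} For $\mathbf{T}_\beta=\beta\mathbf{D}+(1-2\beta)\mathbf{A}$, set $\alpha=\beta$. Matching gives $1-\beta-l=1-2\beta$, so $l=\beta$. The diagonal $l=\alpha$ with $\alpha\in[0,1]$ lies in $[0,1]\times[0,1]\subset\Omega_{\text{GLGR}}$. It contains the endpoint $\beta=1$, which gives $\mathbf{D}-\mathbf{A}=\mathbf{L}$ at the interior-type point $(1,1)$. This is exactly where $\mathbf{P}_{\alpha,k}$ needed $k\to\infty$.

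\textbf{Step 4: contrast and classical operators.} I would state that each of the three parameterizations is an affine, hence continuous and bounded, path in $\Omega_{\text{GLGR}}$. This contrasts with the singular reparameterization $k(\alpha)$ in $\mathbf{P}_{\alpha,k}$. I would also record the classical points:
\begin{itemize}
\item $\mathbf{A}=(0,0)$,
\item $\mathbf{D}=(1,0)$,
\item $\mathbf{Q}_{SL}$ up to the scale $\tfrac12$ at $(1/2,0)$,
\item $\mathbf{L}$ either at $(1,1)$ or, scaled by $\tfrac12$, at $(1/2,1)$.
\end{itemize}

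\textbf{Main obstacle.} There is no deep step here. The one point needing care is the interpretation of ``covers'': $\mathbf{L}$ and $\mathbf{Q}_{SL}$ have $|a|+|b|=2$, so some must be covered up to positive scaling or via a different trajectory. I would make explicit that coverage of the families $\mathbf{A}_\alpha,\mathbf{L}_\alpha,\mathbf{T}_\alpha$ is exact, while $\mathbf{Q}_{SL}$ is recovered only up to a positive scalar. Positive scaling does not change eigenvectors, and it rescales the spectrum uniformly.
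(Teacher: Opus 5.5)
Your proposal is correct and follows the paper's proof essentially step for step: both match the $\mathbf{D}$ and $\mathbf{A}$ coefficients to get $l=0$ for $\mathbf{A}_\alpha$, $l=2(1-\alpha)$ for $\mathbf{L}_\alpha$ and $l=\alpha$ for $\mathbf{T}_\alpha$, and both list the same classical points, with $\mathbf{Q}_{SL}$ attained only as $\tfrac12\mathbf{Q}_{SL}$ at $(\tfrac12,0)$. Two small precisions: the real obstruction for $\mathbf{Q}_{SL}$ is that the coefficient sum $\alpha+(1-\alpha-l)=1-l$ never exceeds $1$ (the criterion $|a|+|b|=2$ is also met by $\mathbf{L}$, which is attained exactly), and $(1,1)$ lies on the edge $\alpha=1$ of $\Omega_{\text{GLGR}}$ rather than in its interior.
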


\begin{proof}
The proof follows by constructive identification of the parameters $(\alpha, l)$ for each target matrix.
First, the Adjacency matrix $\mathbf{A}$ is recovered at the origin $(\alpha=0, l=0)$. The Degree matrix $\mathbf{D}$ corresponds to the line $l = 1-\alpha$ where the adjacency term vanishes, or simply $\alpha=1, l=0$.
Second, the Laplacian $\mathbf{L} = \mathbf{D} - \mathbf{A}$ is obtained at $(\alpha,l)=(1,1)$. Similarly, the Signless Laplacian is obtained at $(\alpha,l)=(\tfrac12,0)$ up to positive scaling, and lies on the line $l=0$ with positive adjacency coefficient.
Third, we address the parametric families. Nikiforov's $\mathbf{A}_{\alpha}$ corresponds trivially to the axis $l=0$. Wang's $\mathbf{L}_{\alpha}$ corresponds to the linear constraint $l = 2(1-\alpha)$, which maps to a valid segment within the domain since $0 \le 2(1-\alpha) \le 2$ for $\alpha \in [0, 1]$.
Finally, the transition matrix $\mathbf{T}_{\alpha} = \alpha \mathbf{D} + (1-2\alpha)\mathbf{A}$ is recovered by setting $l = \alpha$. Since $\alpha \in [0, 1]$, this trajectory lies entirely within the bounded rectangle of the parameter space. This differs from $\mathbf{P}_{\alpha, k}$, where recovering $\mathbf{T}_{\alpha}$ required infinite parameters. Thus, GLGR provides a bounded representation of these operator trajectories without asymptotic parameters.
\end{proof}

\subsection{Gradient Stability and Learnability}
While the $\mathbf{P}_{\alpha,k}$ framework theoretically unifies classical operators, its non-linear parameter coupling creates severe optimization bottlenecks in deep learning. Specifically, the gradient with respect to $\alpha$ is $\frac{\partial \mathbf{P}}{\partial \alpha} = \mathbf{D} + (2k-1)\mathbf{A}$. As the network adapts towards transition-like operators ($k \to \infty$), the gradient linearly diverges, leading to gradient explosion. Bounding $k$ with a squashing map does not remove the obstruction, the optimization suffers from zero-gradient saturation at the boundaries, rendering crucial spectral regions unreachable. Thus, the issue is not the squashing function itself, but forcing a non-compact operator family into a bounded parameterization.

By contrast, GLGR is already compact in $(\alpha,l)$. At the operator level, its gradients are strictly linear and constant matrices: $\frac{\partial \mathbf{Q}}{\partial l} = -\mathbf{A}$ and $\frac{\partial \mathbf{Q}}{\partial \alpha} = \mathbf{D} - \mathbf{A} = \mathbf{L}$. This topology-independent, bounded gradient flow avoids the singular behavior induced by non-compact parameterization and substantially improves optimization conditioning, making GLGR more amenable to stable end-to-end optimization via standard backpropagation.

\section{Spectral Analysis and Properties}
	\subsection{Physical Interpretation via Energy Decomposition}
	In GSP, the Laplacian quadratic form $\mathbf{x}^\top \mathbf{L} \mathbf{x}$ measures the “smoothness" or Dirichlet energy of a signal. However, this measure ignores the signal's global energy magnitude. We prove that $\mathbf{Q}_{\alpha, l}$ provides a physically meaningful trade-off between these two fundamental quantities.
	
	\begin{theorem}[Energy Decomposition] \label{thm:energy}
		For any graph signal $\mathbf{x} \in \mathbb{R}^N$, the quadratic form associated with $\mathbf{Q}_{\alpha, l}$ decomposes into a linear combination of the Dirichlet energy (local variation) and the Total Vertex energy (global retention):
        \begin{align}
        E_{\alpha, l}(\mathbf{x}) &= \mathbf{x}^\top \mathbf{Q}_{\alpha, l} \mathbf{x} \nonumber \\
        &= C_{\text{smooth}} \sum_{(i,j) \in \mathcal{E}} (x_i - x_j)^2 + C_{\text{global}} \sum_{i \in \mathcal{V}} d_i x_i^2,
        \end{align}
		where the coefficients are given by:
		\begin{equation}
			C_{\text{smooth}} = \alpha + l - 1, \quad C_{\text{global}} = 1 - l.
		\end{equation}
	\end{theorem}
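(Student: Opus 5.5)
The plan is a direct algebraic computation: write both quadratic forms $\mathbf{x}^\top\mathbf{D}\mathbf{x}$ and $\mathbf{x}^\top\mathbf{A}\mathbf{x}$ in terms of the two target quantities, then use the linearity of $\mathbf{Q}_{\alpha,l}$ in $\mathbf{D}$ and $\mathbf{A}$ from \eqref{eq:Q_def}. We abbreviate
\[
S_{\mathcal{E}}(\mathbf{x})=\sum_{(i,j)\in\mathcal{E}}(x_i-x_j)^2,
\qquad
S_{\mathcal{V}}(\mathbf{x})=\sum_{i\in\mathcal{V}}d_i x_i^2 .
\]
Throughout, each undirected edge $(i,j)\in\mathcal{E}$ is counted once. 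The same argument goes through verbatim for nonnegative edge weights, with $(x_i-x_j)^2$ weighted by $A_{ij}$ and $d_i=\sum_j A_{ij}$.

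First, since $\mathbf{D}$ is diagonal, $\mathbf{x}^\top\mathbf{D}\mathbf{x}=S_{\mathcal{V}}(\mathbf{x})$. Next, by symmetry of $\mathbf{A}$,
\[
\mathbf{x}^\top\mathbf{A}\mathbf{x}=\sum_{i,j}A_{ij}x_ix_j=2\sum_{(i,j)\in\mathcal{E}}x_ix_j .
\]
Expanding the square in the Dirichlet energy gives
\[
S_{\mathcal{E}}(\mathbf{x})=\sum_{(i,j)\in\mathcal{E}}\bigl(x_i^2+x_j^2\bigr)-2\sum_{(i,j)\in\mathcal{E}}x_ix_j .
\]
In the first sum, each vertex $i$ appears once per incident edge, so it equals $S_{\mathcal{V}}(\mathbf{x})$. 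Hence $S_{\mathcal{E}}(\mathbf{x})=\mathbf{x}^\top(\mathbf{D}-\mathbf{A})\mathbf{x}$, which is the classical identity $\mathbf{x}^\top\mathbf{L}\mathbf{x}=S_{\mathcal{E}}(\mathbf{x})$. Equivalently, $\mathbf{x}^\top\mathbf{A}\mathbf{x}=S_{\mathcal{V}}(\mathbf{x})-S_{\mathcal{E}}(\mathbf{x})$.

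Substituting both identities into \eqref{eq:Q_def} gives
\[
E_{\alpha,l}(\mathbf{x})=\alpha S_{\mathcal{V}}+(1-\alpha-l)\bigl(S_{\mathcal{V}}-S_{\mathcal{E}}\bigr)=(1-l)\,S_{\mathcal{V}}+(\alpha+l-1)\,S_{\mathcal{E}}.
\]
Reading off the coefficients yields $C_{\text{smooth}}=\alpha+l-1$ and $C_{\text{global}}=1-l$, as claimed. As a consistency check, at $(\alpha,l)=(1,1)$ the formula gives $\mathbf{L}$, with $C_{\text{global}}=0$ and $C_{\text{smooth}}=1$. At $(1,0)$ it gives $\mathbf{D}$, with $C_{\text{smooth}}=0$ and $C_{\text{global}}=1$, matching Theorem~1's identifications.

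There is no real obstacle here. The only point requiring care is the bookkeeping convention: edges are counted once in $S_{\mathcal{E}}$, while $\mathbf{x}^\top\mathbf{A}\mathbf{x}$ counts each edge twice, and this is what produces the exact factor in $\mathbf{x}^\top\mathbf{A}\mathbf{x}=S_{\mathcal{V}}-S_{\mathcal{E}}$. I would state this convention explicitly, together with the remark that self-loops are excluded or contribute zero to $S_{\mathcal{E}}$, before the computation.
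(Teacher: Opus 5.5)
Your proposal is correct and follows essentially the same route as the paper: rewrite $\mathbf{x}^\top\mathbf{A}\mathbf{x}$ via the edge identity $(x_i-x_j)^2=x_i^2+x_j^2-2x_ix_j$ summed over edges, obtaining $\mathbf{x}^\top\mathbf{A}\mathbf{x}=\sum_i d_ix_i^2-\sum_{(i,j)\in\mathcal{E}}(x_i-x_j)^2$, and substitute this into the definition of $\mathbf{Q}_{\alpha,l}$ to read off $C_{\text{global}}=1-l$ and $C_{\text{smooth}}=\alpha+l-1$. Your explicit statement of the edge-counting convention, and your note that the argument carries over to weighted graphs and to self-loops, are small additions the paper leaves implicit.
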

    
	\begin{proof}
		We start with the algebraic definition of the quadratic form:
		\begin{equation}
			E_{\alpha, l}(\mathbf{x}) = \alpha \mathbf{x}^\top \mathbf{D} \mathbf{x} + (1-\alpha-l) \mathbf{x}^\top \mathbf{A} \mathbf{x}.
		\end{equation}
		Expanding the terms using the properties $\mathbf{x}^\top \mathbf{D} \mathbf{x} = \sum_i d_i x_i^2$ and $\mathbf{x}^\top \mathbf{A} \mathbf{x} = \sum_{(i,j) \in \mathcal{E}} 2 x_i x_j$:
		\begin{equation} \label{eq:expansion}
			E_{\alpha, l}(\mathbf{x}) = \alpha \sum_{i \in \mathcal{V}} d_i x_i^2 + (1-\alpha-l) \sum_{(i,j) \in \mathcal{E}} 2 x_i x_j.
		\end{equation}
		Using the edge identity
		\begin{equation}
			(x_i - x_j)^2 = x_i^2 + x_j^2 - 2x_i x_j.
		\end{equation}
		and summing over all edges $(i,j) \in \mathcal{E}$, we obtain
		\begin{equation}
			\sum_{(i,j) \in \mathcal{E}} 2 x_i x_j = \sum_{i \in \mathcal{V}} d_i x_i^2 - \sum_{(i,j) \in \mathcal{E}} (x_i - x_j)^2.
		\end{equation}
		Substituting this into Eq. (\ref{eq:expansion}) yields
        \begin{equation}
        \begin{aligned}
            &E_{\alpha, l}(\mathbf{x}) = \alpha \sum\nolimits_i d_i x_i^2 \\
            &\quad + (1-\alpha-l) \left[ \sum\nolimits_i d_i x_i^2 - \sum\nolimits_{(i,j)\in\mathcal{E}} (x_i - x_j)^2 \right] \\
            &= [\alpha + (1-\alpha-l)] \sum\nolimits_i d_i x_i^2 \\
            &\quad - (1-\alpha-l) \sum\nolimits_{(i,j)\in\mathcal{E}} (x_i - x_j)^2 \\
            &= (1-l) \sum_{i \in \mathcal{V}} d_i x_i^2  + (l + \alpha - 1) \sum_{(i,j) \in \mathcal{E}} (x_i - x_j)^2.
        \end{aligned}
        \end{equation}
        
		Therefore the claimed coefficients follow directly.
	\end{proof}

	\subsection{Spectral Stability Analysis}
	A critical requirement for robust GSP is spectral stability: small perturbations in the graph topology should not yield catastrophic changes in the representation matrix. We prove that GLGR offers superior stability compared to the $\mathbf{P}_{\alpha, k}$ family.
	\begin{proposition}
    Let $\lambda_k(\mathbf{Q}_{\alpha,l})$ denote the $k$-th eigenvalue of $\mathbf{Q}_{\alpha,l}$ sorted in nondecreasing order. For fixed $\alpha$, the mapping $l \mapsto \mathbf{Q}_{\alpha,l}$ is affine, and each ordered eigenvalue is Lipschitz continuous in $l$ with constant at most $\|\mathbf{A}\|_2$.
\end{proposition}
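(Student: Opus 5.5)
The plan is to write $\mathbf{Q}_{\alpha,l}$ explicitly as an affine function of $l$ and then apply Weyl's eigenvalue perturbation inequality for symmetric matrices.

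\emph{Affine structure.} From the Definition (Eq.~(\ref{eq:Q_def})), with $\alpha$ held fixed,
\begin{equation}
\mathbf{Q}_{\alpha,l} = \bigl(\alpha\mathbf{D} + (1-\alpha)\mathbf{A}\bigr) - l\,\mathbf{A} = \mathbf{Q}_{\alpha,0} - l\,\mathbf{A}.
\end{equation}
This is affine in $l$, which is the first claim. It also agrees with the constant gradient $\partial\mathbf{Q}/\partial l = -\mathbf{A}$ noted in the discussion of gradient stability. Consequently, for any $l,l' \in [0,2]$ we get $\mathbf{Q}_{\alpha,l'} - \mathbf{Q}_{\alpha,l} = -(l'-l)\mathbf{A}$.

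\emph{Weyl's inequality.} Because the graph is undirected, $\mathbf{A}$ and $\mathbf{D}$ are real symmetric. Hence every $\mathbf{Q}_{\alpha,l}$ is real symmetric, and its eigenvalues are real and can be sorted nondecreasingly. Weyl's inequality states that for symmetric $\mathbf{M}$ and $\mathbf{E}$,
\begin{equation}
|\lambda_k(\mathbf{M}+\mathbf{E}) - \lambda_k(\mathbf{M})| \le \|\mathbf{E}\|_2 \quad \text{for every } k.
\end{equation}
A short self-contained justification uses the Courant--Fischer min-max characterization. For any unit vector $\mathbf{x}$ we have $\mathbf{x}^\top\mathbf{E}\mathbf{x} \in [-\|\mathbf{E}\|_2, \|\mathbf{E}\|_2]$, so each Rayleigh quotient moves by at most $\|\mathbf{E}\|_2$. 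Taking min over subspaces and max over vectors in them preserves this bound.

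Setting $\mathbf{M} = \mathbf{Q}_{\alpha,l}$ and $\mathbf{E} = -(l'-l)\mathbf{A}$ gives
\begin{equation}
|\lambda_k(\mathbf{Q}_{\alpha,l'}) - \lambda_k(\mathbf{Q}_{\alpha,l})| \le |l'-l|\,\|\mathbf{A}\|_2,
\end{equation}
which is exactly Lipschitz continuity with constant $\|\mathbf{A}\|_2$.

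\emph{Main obstacle.} The only subtle point is that individual eigenvalues are not smooth in $l$ where they cross. This is why the argument works with the \emph{ordered} eigenvalues through Weyl rather than by differentiating eigenpairs (Hellmann--Feynman). The min-max route avoids any differentiability assumption.

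Optionally, I would add that $\|\mathbf{A}\|_2 = \lambda_{\max}(\mathbf{A}) \le d_{\max}$, by Perron--Frobenius or the Gershgorin bound. This makes the constant bounded by the maximum degree, independently of $\alpha$ and $l$. It contrasts with $\mathbf{P}_{\alpha,k}$, whose effective coefficient $(2k-1)$ becomes unbounded on the transition trajectory.
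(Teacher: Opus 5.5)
Your proposal is correct and matches the paper's proof: you write $\mathbf{Q}_{\alpha,l'}-\mathbf{Q}_{\alpha,l}=-(l'-l)\mathbf{A}$ and apply Weyl's inequality to obtain the Lipschitz constant $\|\mathbf{A}\|_2$. Your Courant--Fischer justification of Weyl and the remark $\|\mathbf{A}\|_2=\lambda_{\max}(\mathbf{A})\le d_{\max}$ (valid for a nonnegative adjacency matrix) are correct additions that the paper does not include.
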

	    
	    \begin{proof}
	    From Eq. (\ref{eq:Q_def}),
	    \begin{equation}
	        \mathbf{Q}_{\alpha,l_2}-\mathbf{Q}_{\alpha,l_1}=-(l_2-l_1)\mathbf{A}.
	    \end{equation}
	    Weyl's inequality gives
	    \begin{equation}
	        |\lambda_k(\mathbf{Q}_{\alpha,l_2})-\lambda_k(\mathbf{Q}_{\alpha,l_1})| \le \|\mathbf{Q}_{\alpha,l_2}-\mathbf{Q}_{\alpha,l_1}\|_2 = |l_2-l_1|\,\|\mathbf{A}\|_2.
	    \end{equation}
	    Hence each eigenvalue is Lipschitz continuous in $l$.
	    \end{proof}

	\begin{theorem}[Spectral Perturbation Bound] \label{thm:stability}
		Let $\mathcal{G}$ be a graph with adjacency matrix $\mathbf{A}$. Consider a perturbed graph $\tilde{\mathcal{G}}$ with $\tilde{\mathbf{A}} = \mathbf{A} + \mathbf{E}$, where $\mathbf{E}$ represents the structural noise. Let $\lambda_k(\cdot)$ denote the $k$-th eigenvalue sorted in nondecreasing order. The spectral perturbation of the GLGR matrix is bounded by:
		\begin{equation}
			| \lambda_k(\tilde{\mathbf{Q}}_{\alpha, l}) - \lambda_k(\mathbf{Q}_{\alpha, l}) | \le \mathcal{K}_{\mathrm{GLGR}} \| \mathbf{E} \|_2,
		\end{equation}
		where the sensitivity coefficient $\mathcal{K}_{\mathrm{GLGR}} = \alpha\sqrt{N} + |1 - \alpha - l|$.
		In contrast, the sensitivity of the $\mathbf{P}_{\alpha, k}$ family diverges as $k \to \infty$.
	\end{theorem}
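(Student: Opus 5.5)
The plan is to treat the perturbation of the GLGR matrix as a single symmetric perturbation and apply Weyl's inequality, as in the preceding Proposition. The perturbed degree matrix is $\tilde{\mathbf{D}}=\mathbf{D}+\mathbf{E}_D$ with $\mathbf{E}_D=\mathrm{diag}(\mathbf{E}\mathbf{1})$. We assume $\mathbf{E}$ is symmetric, so that $\tilde{\mathcal{G}}$ is again undirected. Then
\begin{equation}
\tilde{\mathbf{Q}}_{\alpha,l}-\mathbf{Q}_{\alpha,l}=\alpha\,\mathbf{E}_D+(1-\alpha-l)\,\mathbf{E},
\end{equation}
which is symmetric. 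Weyl's inequality therefore gives $|\lambda_k(\tilde{\mathbf{Q}}_{\alpha,l})-\lambda_k(\mathbf{Q}_{\alpha,l})|\le \|\alpha\mathbf{E}_D+(1-\alpha-l)\mathbf{E}\|_2$. The triangle inequality bounds this by $\alpha\|\mathbf{E}_D\|_2+|1-\alpha-l|\,\|\mathbf{E}\|_2$.

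The remaining step, and the only nontrivial one, is to control the degree perturbation by $\|\mathbf{E}_D\|_2\le\sqrt{N}\|\mathbf{E}\|_2$. Since $\mathbf{E}_D$ is diagonal, its spectral norm is $\max_i|(\mathbf{E}\mathbf{1})_i|\le\|\mathbf{E}\mathbf{1}\|_2$. Bounding the operator norm against the vector $\mathbf{1}$ gives $\|\mathbf{E}\mathbf{1}\|_2\le\|\mathbf{E}\|_2\|\mathbf{1}\|_2=\sqrt{N}\|\mathbf{E}\|_2$. Combining the pieces yields exactly $\mathcal{K}_{\mathrm{GLGR}}=\alpha\sqrt{N}+|1-\alpha-l|$.

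The $\sqrt N$ factor is where the argument has to be handled with care. It is not tight in general, since the row-sum norm $\|\mathbf{E}\|_\infty$ would give a sharper bound. However, it is the natural quantity when only $\|\mathbf{E}\|_2$ is available, and the chain above shows it suffices. I would also note that on the compact domain $\Omega_{\text{GLGR}}$ the factor $|1-\alpha-l|\le 2$, so $\mathcal{K}_{\mathrm{GLGR}}\le\sqrt N+2$ is uniformly bounded.

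For the contrast with $\mathbf{P}_{\alpha,k}$, the same computation gives $\tilde{\mathbf{P}}_{\alpha,k}-\mathbf{P}_{\alpha,k}=\alpha\mathbf{E}_D+(2k-1)(\alpha-1)\mathbf{E}$, with sensitivity coefficient $\alpha\sqrt N+|2k-1||\alpha-1|$. To argue genuine divergence rather than just a diverging upper bound, I would exhibit a lower bound. Take $\alpha<1$ fixed and $\mathbf{E}$ with zero row sums, for example a degree-preserving edge swap, so that $\mathbf{E}_D=0$. The perturbation is then $(2k-1)(\alpha-1)\mathbf{E}$. The extreme eigenvalue shifts by at least a constant multiple of $|2k-1||\alpha-1|\,\|\mathbf{E}\|_2$ up to the spectral gap structure; in particular, choosing $\mathbf{A}$ and $\mathbf{E}$ to commute makes the shift exactly proportional. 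Along the $\mathbf{T}_\alpha$-matching curve, where $k\to\infty$ as $\alpha\to1$, the product $|2k-1||\alpha-1|$ stays bounded. The divergence claim should therefore be stated for fixed $\alpha<1$ with $k\to\infty$, which is how I would phrase it in the final argument.
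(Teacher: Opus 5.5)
Your derivation of the bound matches the paper's proof step for step: Weyl's inequality for the symmetric difference $\Delta\mathbf{Q}=\alpha\,\mathrm{diag}(\mathbf{E}\mathbf{1})+(1-\alpha-l)\mathbf{E}$, the triangle inequality, and $\|\mathrm{diag}(\mathbf{E}\mathbf{1})\|_2=\|\mathbf{E}\mathbf{1}\|_\infty\le\|\mathbf{E}\mathbf{1}\|_2\le\sqrt{N}\,\|\mathbf{E}\|_2$. You depart from the paper only in the contrast clause, and there your reading is the sound one. The paper argues that along the transition trajectory $k\to\infty$, so that $|(2k-1)(\alpha-1)|\to\infty$. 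But on that curve $2k-1=(1-2\alpha)/(\alpha-1)$, so the adjacency coefficient is identically $1-2\alpha$ and stays bounded; it must, since the operator there is just $\mathbf{T}_\alpha$. The divergence holds only for fixed $\alpha\neq1$ with $k$ pushed to infinity (formally, outside $[0,1]$), which is how you propose to state it.

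Your lower-bound step goes beyond the paper, which only exhibits a diverging upper-bound coefficient. As written, however, it has a hole: commutation of $\mathbf{A}$ and $\mathbf{E}$, even with zero row sums, does not force any eigenvalue to move. On $\mathbb{Z}_5$, the circulant $5$-cycles with jumps $\pm1$ and $\pm2$ commute, are both $2$-regular, and are isomorphic. Hence every ordered eigenvalue of $\mathbf{P}_{\alpha,k}$ is unchanged for every $k$. Relatedly, in the regular, degree-preserving setting the largest eigenvalue never moves when the adjacency coefficient is positive, so your "extreme eigenvalue" claim fails there too.

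What you actually need is that $\tilde{\mathbf{A}}$ and $\mathbf{A}$ are not cospectral; then neither commutation nor zero row sums is required. Put $c=(2k-1)(\alpha-1)$. Weyl gives $|\lambda_j(\alpha\mathbf{D}+c\mathbf{A})-\lambda_j(c\mathbf{A})|\le\alpha\|\mathbf{D}\|_2$, and likewise with tildes. Moreover, $|\lambda_j(c\tilde{\mathbf{A}})-\lambda_j(c\mathbf{A})|=|c|\,|\lambda_{j'}(\tilde{\mathbf{A}})-\lambda_{j'}(\mathbf{A})|$, with $j'=j$ for $c>0$ and $j'=N+1-j$ for $c<0$. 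Choose $j$ so that this last difference is some $\delta>0$. The eigenvalue shift is then at least $|c|\delta-\alpha(\|\mathbf{D}\|_2+\|\tilde{\mathbf{D}}\|_2)$, which grows linearly in $|c|$.
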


		\begin{proof}
			Because both $\mathcal{G}$ and $\tilde{\mathcal{G}}$ are undirected, the matrices $\mathbf{A}$, $\tilde{\mathbf{A}}$, and $\mathbf{E}=\tilde{\mathbf{A}}-\mathbf{A}$ are symmetric. Weyl's inequality therefore gives
			\begin{equation}
				|\lambda_k(\tilde{\mathbf{Q}}_{\alpha,l})-\lambda_k(\mathbf{Q}_{\alpha,l})| \le \|\Delta \mathbf{Q}\|_2,
			\end{equation}
			where $\Delta \mathbf{Q}:=\tilde{\mathbf{Q}}_{\alpha,l}-\mathbf{Q}_{\alpha,l}$.

			Since $\tilde{\mathbf{D}}-\mathbf{D}=\mathrm{diag}(\mathbf{E}\mathbf{1})$, we have
			\begin{equation}
				\Delta \mathbf{Q}=\alpha\,\mathrm{diag}(\mathbf{E}\mathbf{1})+(1-\alpha-l)\mathbf{E}.
			\end{equation}
			Hence, by the triangle inequality,
			\begin{equation}
				\|\Delta \mathbf{Q}\|_2 \le \alpha\,\|\mathrm{diag}(\mathbf{E}\mathbf{1})\|_2 + |1-\alpha-l|\,\|\mathbf{E}\|_2.
			\end{equation}

			For any vector $\mathbf{v}$, the spectral norm of the diagonal matrix $\mathrm{diag}(\mathbf{v})$ is
			\begin{equation}
				\|\mathrm{diag}(\mathbf{v})\|_2 = \|\mathbf{v}\|_\infty.
			\end{equation}
			Applying this with $\mathbf{v}=\mathbf{E}\mathbf{1}$ yields
            \begin{equation}
			\begin{aligned}
				&\|\mathrm{diag}(\mathbf{E}\mathbf{1})\|_2
				= \|\mathbf{E}\mathbf{1}\|_\infty 
				\le \|\mathbf{E}\mathbf{1}\|_2 
				\le \|\mathbf{E}\|_2\,\|\mathbf{1}\|_2 = \sqrt{N}\,\|\mathbf{E}\|_2.
			\end{aligned}
            \end{equation}
			Substituting this bound gives
			\begin{equation}
				\|\Delta \mathbf{Q}\|_2 \le \big(\alpha\sqrt{N}+|1-\alpha-l|\big)\|\mathbf{E}\|_2.
			\end{equation}
			Combining with Weyl's inequality proves
			\begin{equation}
				|\lambda_k(\tilde{\mathbf{Q}}_{\alpha,l})-\lambda_k(\mathbf{Q}_{\alpha,l})| \le \big(\alpha\sqrt{N}+|1-\alpha-l|\big)\|\mathbf{E}\|_2.
			\end{equation}

			For comparison, the adjacency coefficient in the $\mathbf{P}_{\alpha,k}$ family is $(2k-1)(\alpha-1)$. Along the transition trajectory one requires $k\to\infty$, so
			\begin{equation}
				\lim_{k\to\infty}|(2k-1)(\alpha-1)|=\infty,
			\end{equation}
			which explains why its perturbation sensitivity cannot remain uniformly bounded in the same manner.
		\end{proof}

\begin{remark}[Tightness of the degree-perturbation term]
The bound in Theorem~\ref{thm:stability} is intentionally graph agnostic and therefore conservative. Before introducing the uniform estimate $\|\mathbf{E}\mathbf{1}\|_{\infty}\le \sqrt{N}\,\|\mathbf{E}\|_2$, one already has the sharper operator bound
\begin{equation}
    \|\Delta \mathbf{Q}\|_2 \le \alpha\,\|\mathbf{E}\mathbf{1}\|_{\infty} + |1-\alpha-l|\,\|\mathbf{E}\|_2.
\end{equation}
Here $\|\mathbf{E}\mathbf{1}\|_{\infty}$ is exactly the maximum node-degree perturbation induced by the structural noise. In sparse graphs with localized edge insertions or deletions, this quantity is often much smaller than $\sqrt{N}\,\|\mathbf{E}\|_2$, so the practical perturbation level can be substantially below the worst-case bound stated in Theorem~\ref{thm:stability}.
\end{remark}

\subsection{Positive Semi-Definiteness}
In many GSP applications, particularly those involving graph kernels and smoothness regularization, the representation matrix is strictly required to be PSD. Unlike the Laplacian $\mathbf{L}$, the general GLGR matrix $\mathbf{Q}_{\alpha, l}$ is not automatically PSD due to the indefinite nature of the adjacency component. Here, we derive a conservative, graph-aware sufficient condition for PSD based on spectral bounds, and then use the $d$-regular case as an illustrative special example for geometric visualization.

\begin{theorem}[Sufficient PSD Condition]
Consider a graph $\mathcal{G}$ with minimum degree $d_{\min}$. Let the spectrum of its Adjacency matrix $\mathbf{A}$ be bounded by $[\lambda_{\min}(\mathbf{A}), \lambda_{\max}(\mathbf{A})]$. A sufficient condition for $\mathbf{Q}_{\alpha, l}$ to be positive semi-definite is:
\begin{equation} \label{eq:psd_condition}
    \alpha d_{\min} + \min \left\{ (1 - \alpha - l) \lambda_{\min}(\mathbf{A}), (1 - \alpha - l) \lambda_{\max}(\mathbf{A}) \right\} \ge 0.
\end{equation}
For $d$-regular graphs, as an illustrative special case rather than a universal PSD boundary, this condition simplifies to the region defined by the intersection of $2\alpha + l \ge 1$ and $l \le 1$.
\end{theorem}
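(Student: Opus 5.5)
The plan is to bound the Rayleigh quotient of $\mathbf{Q}_{\alpha,l}$ from below term by term. For any unit vector $\mathbf{x}$ we write
\begin{equation*}
\mathbf{x}^\top \mathbf{Q}_{\alpha,l}\mathbf{x} = \alpha\,\mathbf{x}^\top\mathbf{D}\mathbf{x} + (1-\alpha-l)\,\mathbf{x}^\top\mathbf{A}\mathbf{x}.
\end{equation*}
Since $\mathbf{Q}_{\alpha,l}$ is symmetric, it is PSD exactly when this quantity is nonnegative for every unit $\mathbf{x}$.

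\textbf{Degree term.} Because $\alpha\ge 0$ and $\mathbf{D}$ is diagonal with entries at least $d_{\min}$, we have $\alpha\,\mathbf{x}^\top\mathbf{D}\mathbf{x}=\alpha\sum_i d_i x_i^2 \ge \alpha d_{\min}$.

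\textbf{Adjacency term.} By the Rayleigh--Ritz theorem, $\mathbf{x}^\top\mathbf{A}\mathbf{x}$ lies in $[\lambda_{\min}(\mathbf{A}),\lambda_{\max}(\mathbf{A})]$. The sign of the coefficient $c:=1-\alpha-l$ decides which endpoint matters, so we split into two cases. If $c\ge 0$, then $c\,\mathbf{x}^\top\mathbf{A}\mathbf{x}\ge c\lambda_{\min}(\mathbf{A})$. If $c<0$, then $c\,\mathbf{x}^\top\mathbf{A}\mathbf{x}\ge c\lambda_{\max}(\mathbf{A})$. In both cases the term is at least $\min\{c\lambda_{\min}(\mathbf{A}),c\lambda_{\max}(\mathbf{A})\}$, because the affine map $t\mapsto ct$ attains its minimum over an interval at an endpoint. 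Adding the two lower bounds shows that the left-hand side of Eq.~(\ref{eq:psd_condition}) is a lower bound for $\lambda_{\min}(\mathbf{Q}_{\alpha,l})$, which gives sufficiency. This is a Weyl-type estimate $\lambda_{\min}(X+Y)\ge\lambda_{\min}(X)+\lambda_{\min}(Y)$ applied with $X=\alpha\mathbf{D}$ and $Y=c\mathbf{A}$.

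\textbf{Regular case.} For a $d$-regular graph we substitute $d_{\min}=d$, $\lambda_{\max}(\mathbf{A})=d$, and the lower bound $\lambda_{\min}(\mathbf{A})\ge -d$. We use the crude endpoint $-d$, attained in the bipartite case, as the worst case, since we only claim an illustrative region. Dividing by $d>0$ then leaves two cases.
\begin{itemize}
\item If $c\ge0$, the condition reads $\alpha-(1-\alpha-l)\ge0$, that is, $2\alpha+l\ge1$.
\item If $c<0$, the condition reads $\alpha+(1-\alpha-l)\ge 0$, that is, $l\le 1$.
\end{itemize}
Requiring the condition uniformly for both signs of $c$ gives the intersection $\{2\alpha+l\ge1\}\cap\{l\le1\}$. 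A cleaner route: since $\mathbf{D}=d\mathbf{I}$, the eigenvalues of $\mathbf{Q}_{\alpha,l}$ are $\alpha d+c\mu$ with $\mu\in[-d,d]$, and these are nonnegative for all such $\mu$ iff $\alpha\pm c\ge0$, i.e.\ $1-l\ge 0$ and $2\alpha+l-1\ge 0$. This is exactly the stated intersection, and it matches the variational form of Theorem~\ref{thm:energy}, where $C_{\text{global}}=1-l$ and $C_{\text{smooth}}+C_{\text{global}}\cdot(\cdot)$ recover the same two inequalities.

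The main obstacle is the regular-graph simplification. Stated literally with the graph's true $\lambda_{\min}(\mathbf{A})$, the region depends on the graph and can be larger than the stated one for non-bipartite graphs. I would therefore present the intersection as the condition obtained from the worst-case bound $\lambda_{\min}(\mathbf{A})\ge-d$. The result is a region that is sufficient for every $d$-regular graph and exact for bipartite ones, in line with the theorem's ``illustrative special case'' wording.
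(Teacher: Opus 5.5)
Your proposal is correct and follows the paper's proof essentially step for step: the same Rayleigh-quotient lower bounds on $\alpha\mathbf{x}^\top\mathbf{D}\mathbf{x}$ and $(1-\alpha-l)\mathbf{x}^\top\mathbf{A}\mathbf{x}$, the same case split on the sign of $1-\alpha-l$, and the same substitution $d_{\min}=\lambda_{\max}(\mathbf{A})=d$, $\lambda_{\min}(\mathbf{A})\ge -d$ in the regular case. Your $\mathbf{D}=d\mathbf{I}$ reformulation $\alpha\ge|1-\alpha-l|$ usefully shows why the two case-wise regions combine into an intersection, which the paper only asserts; one small slip is in your energy-decomposition aside, where the two inequalities should read $C_{\text{global}}\ge 0$ and $2C_{\text{smooth}}+C_{\text{global}}\ge 0$ (because the Laplacian spectrum of a $d$-regular graph lies in $[0,2d]$), not the expression you wrote, though this is not load-bearing.
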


\begin{proof}
The PSD property requires that the quadratic form $q(\mathbf{x}) = \mathbf{x}^\top \mathbf{Q}_{\alpha, l} \mathbf{x}$ be non-negative for all non-zero vectors $\mathbf{x} \in \mathbb{R}^N$. Substituting the definition of $\mathbf{Q}_{\alpha, l}$:
\begin{equation}
    q(\mathbf{x}) = \alpha \mathbf{x}^\top \mathbf{D} \mathbf{x} + (1 - \alpha - l) \mathbf{x}^\top \mathbf{A} \mathbf{x}.
\end{equation}
Let $\beta = 1 - \alpha - l$. We utilize the Rayleigh quotient properties. The term $\mathbf{x}^\top \mathbf{D} \mathbf{x}$ is bounded below by $d_{\min} \|\mathbf{x}\|^2$. The term $\mathbf{x}^\top \mathbf{A} \mathbf{x}$ lies in the interval $[\lambda_{\min}(\mathbf{A}) \|\mathbf{x}\|^2, \lambda_{\max}(\mathbf{A}) \|\mathbf{x}\|^2]$.
We analyze the condition $q(\mathbf{x}) \ge 0$ based on the sign of $\beta$:
\begin{enumerate}
    \item \textbf{Case $\beta \ge 0$ (Adjacency-dominant regime):}
    The minimum value of the quadratic form is bounded by:
    \begin{equation}
        \inf_{\mathbf{x} \neq \mathbf{0}} \frac{q(\mathbf{x})}{\|\mathbf{x}\|^2} \ge \alpha d_{\min} + \beta \lambda_{\min}(\mathbf{A}).
    \end{equation}
    Thus, a sufficient condition is $\alpha d_{\min} + (1 - \alpha - l) \lambda_{\min}(\mathbf{A}) \ge 0$.
    \item \textbf{Case $\beta < 0$ (Laplacian-dominant regime):}
    Here, the term $\beta \mathbf{x}^\top \mathbf{A} \mathbf{x}$ is minimized when $\mathbf{x}^\top \mathbf{A} \mathbf{x}$ is maximal (since $\beta$ is negative). Thus:
    \begin{equation}
        \inf_{\mathbf{x} \neq \mathbf{0}} \frac{q(\mathbf{x})}{\|\mathbf{x}\|^2} \ge \alpha d_{\min} + \beta \lambda_{\max}(\mathbf{A}).
    \end{equation}
    The condition becomes $\alpha d_{\min} + (1 - \alpha - l) \lambda_{\max}(\mathbf{A}) \ge 0$.
\end{enumerate}
Combining these cases yields Eq. (\ref{eq:psd_condition}) as a sufficient PSD condition.
For a $d$-regular graph, $d_{\min}=d$, $\lambda_{\max}(\mathbf{A})=d$, and typically $\lambda_{\min}(\mathbf{A}) \ge -d$.
In Case 1 ($\beta \ge 0 \Rightarrow l \le 1-\alpha$): $\alpha d + (1-\alpha-l)(-d) \ge 0 \Rightarrow \alpha - (1-\alpha-l) \ge 0 \Rightarrow l \ge 1 - 2\alpha$.
In Case 2 ($\beta < 0 \Rightarrow l > 1-\alpha$): $\alpha d + (1-\alpha-l)(d) \ge 0 \Rightarrow \alpha + 1 - \alpha - l \ge 0 \Rightarrow l \le 1$.
The intersection of these inequalities defines the illustrative PSD-guaranteed region depicted in Fig. \ref{fig:parameter_space}.
\end{proof}

Note that the exact PSD boundaries depend on the specific graph topology. The shaded blue region in Fig. \ref{fig:parameter_space} visualizes only the simplified $d$-regular-graph case as an illustrative example, rather than the full PSD domain for arbitrary graphs.

Beyond the PSD domain, the figure also illustrates the operator coverage of our framework. As illustrated in Fig. \ref{fig:parameter_space}, the orange shaded region represents the accessible domain of the prior $\mathbf{P}_{\alpha, k}$ framework. Notably, the trajectory of the transition matrix $\mathbf{T}_{\alpha}$ (red solid line) intersects the boundary of $\mathbf{P}_{\alpha, k}$ at $\alpha=\tfrac23$, producing a singularity point ($k \to \infty$, marked by the red cross). For $\alpha > \tfrac23$, $\mathbf{T}_{\alpha}$ lies outside the bounded $\mathbf{P}_{\alpha, k}$ domain. By comparison, our proposed GLGR $\mathbf{Q}_{\alpha, l}$ spans the full $[0, 1] \times [0, 2]$ parameter box (black bounding box), which contains the classical trajectories discussed above together with the transition path, without requiring singular parameter values.

This theorem delineates a specific region in the $(\alpha, l)$ plane where the matrix is valid for kernel operations, providing a theoretical guideline for hyperparameter search.

\section{Adaptive GLGR Graph Neural Networks}
\label{sec:gnn}

Standard Graph Convolutional Networks (GCNs) effectively perform low-pass filtering on graph signals, smoothing node features based on local homophily \cite{kipf2017semi}. However, this rigid smoothing mechanism often leads to the \textit{over-smoothing} problem and fails on \textit{heterophilic graphs}, where adjacent nodes exhibit dissimilar features (high-frequency components) \cite{li2018deeper,nt2019revisiting,zhu2020beyond}. To address these limitations, we propose the \textbf{Adaptive GLGR-GNN}, a spectral architecture that leverages the tunable spectral properties of $\mathbf{Q}_{\alpha, l}$ to learn optimal frequency response functions, ranging from low-pass to high-pass and band-pass filtering.

\subsection{Unified Operator View of Graph Neural Propagation}
Let $\mathbf{X}\in\mathbb{R}^{N\times F}$ denote node features. A broad class of graph neural layers can be written as
\begin{equation}
    \mathbf{H}^{(k+1)} = \Phi\!\left(\sum_{p=0}^{K} c_p\,\mathbf{S}^p\mathbf{H}^{(k)}\mathbf{W}_p^{(k)}\right),
\end{equation}
where $\mathbf{S}$ is a propagation operator and $\Phi$ is a nonlinearity. Different backbones mainly differ in the choice of $\mathbf{S}$ and polynomial coefficients. This abstraction exposes a common insertion point for GLGR: replacing fixed operators by a learnable, bounded operator family.

\subsection{Adaptive GLGR Convolution (AG-Conv)}
For layer $k$, we define
\begin{equation}
    \mathbf{Q}^{(k)} \triangleq \mathbf{Q}_{\alpha^{(k)},l^{(k)}} = \alpha^{(k)}\mathbf{D} + \big(1-\alpha^{(k)}-l^{(k)}\big)\mathbf{A},
\end{equation}
and use it as the propagation basis in a polynomial filter:
\begin{equation} \label{eq:poly_conv}
    \mathbf{H}^{(k+1)} = \Phi\!\left( \sum_{p=0}^K (\mathbf{Q}^{(k)})^p \mathbf{H}^{(k)} \mathbf{\Theta}_p^{(k)} \right).
\end{equation}
Here $\mathbf{\Theta}_p^{(k)} \in \mathbb{R}^{F_{in}\times F_{out}}$ and $(\alpha^{(k)},l^{(k)})$ are trainable structural parameters (shared by all channels in layer $k$). To keep them in-domain, we use a constrained parameterization in practice, e.g.,
\begin{equation}
    \alpha^{(k)} = \sigma(a^{(k)}),\quad l^{(k)} = 2\,\sigma(b^{(k)}),
\end{equation}
where $a^{(k)},b^{(k)}\in\mathbb{R}$ are unconstrained parameters and $\sigma(\cdot)$ is the Sigmoid function.

In the node-classification experiments, AG-Conv uses the learned GLGR operator $\mathbf{Q}^{(k)}$ in Eq.~(\ref{eq:poly_conv}) as its propagation basis. Relative to fixed-operator layers, it learns the structural parameters $(\alpha^{(k)},l^{(k)})$ jointly with the polynomial coefficients $\{\mathbf{\Theta}_p^{(k)}\}_{p=0}^{K}$, so both the operator and its spectral response can adapt to the data. This flexibility helps the model handle different homophily regimes without committing to a fixed low-pass prior. The next proposition shows that, once the layer parameters are fixed, the resulting propagation map still obeys a linear perturbation bound inherited from the underlying GLGR operator.

\begin{proposition}[Layer-Wise Perturbation Bound for AG-Conv]\label{prop:agconv_stability}
Consider the $k$-th AG-Conv layer with fixed structural parameters $(\alpha^{(k)},l^{(k)})$, polynomial coefficients $\{\mathbf{\Theta}_p^{(k)}\}_{p=0}^{K}$, and an $L_{\Phi}$-Lipschitz activation function $\Phi$. Let $\mathbf{H}\in\mathbb{R}^{N\times F_{\mathrm{in}}}$ be the input feature matrix. Suppose the original graph is perturbed by structural noise $\mathbf{E} = \tilde{\mathbf{A}} - \mathbf{A}$. Let $\mathcal{T}_{\mathbf{Q}^{(k)}}$ and $\mathcal{T}_{\tilde{\mathbf{Q}}^{(k)}}$ denote the pre-activation polynomial propagation maps of the original and perturbed graphs, respectively.

Define the maximum operator spectral norm $R_k := \max\!\left\{\|\mathbf{Q}^{(k)}\|_2,\,\|\tilde{\mathbf{Q}}^{(k)}\|_2\right\}$. Then, the layer-wise output perturbation is bounded by:
\begin{equation}\label{eq:agconv_graph_stability}
    \begin{aligned}
        &\big\|\Phi(\mathcal{T}_{\tilde{\mathbf{Q}}^{(k)}}(\mathbf{H}))-\Phi(\mathcal{T}_{\mathbf{Q}^{(k)}}(\mathbf{H}))\big\|_{F} \\
        &\le
        C_k \, L_{\Phi} \, \|\mathbf{H}\|_{F} \, \mathcal{K}_{\mathrm{GLGR}}^{(k)} \, \|\mathbf{E}\|_2,
	\end{aligned}
\end{equation}
where $\mathcal{K}_{\mathrm{GLGR}}^{(k)} := \alpha^{(k)}\sqrt{N}+\big|1-\alpha^{(k)}-l^{(k)}\big|$ is the structural sensitivity coefficient, and $C_k := \sum_{p=1}^{K} p\,R_k^{p-1}\,\|\mathbf{\Theta}_p^{(k)}\|_2$ captures the polynomial amplification factor.
\end{proposition}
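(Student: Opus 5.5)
We first strip off the activation using the Lipschitz property of $\Phi$ (applied entrywise, or as a map in Frobenius norm). This gives
\[
\|\Phi(\mathcal{T}_{\tilde{\mathbf{Q}}^{(k)}}(\mathbf{H}))-\Phi(\mathcal{T}_{\mathbf{Q}^{(k)}}(\mathbf{H}))\|_F \le L_\Phi \|\mathcal{T}_{\tilde{\mathbf{Q}}^{(k)}}(\mathbf{H})-\mathcal{T}_{\mathbf{Q}^{(k)}}(\mathbf{H})\|_F .
\]
The difference of pre-activation maps is $\sum_{p=0}^{K}\big((\tilde{\mathbf{Q}}^{(k)})^p-(\mathbf{Q}^{(k)})^p\big)\mathbf{H}\mathbf{\Theta}_p^{(k)}$. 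The $p=0$ term vanishes because both powers equal $\mathbf{I}$, which is why $C_k$ sums only from $p=1$. By the triangle inequality and the mixed bound $\|\mathbf{M}\mathbf{H}\mathbf{\Theta}\|_F\le\|\mathbf{M}\|_2\|\mathbf{H}\|_F\|\mathbf{\Theta}\|_2$, the task reduces to bounding $\|(\tilde{\mathbf{Q}}^{(k)})^p-(\mathbf{Q}^{(k)})^p\|_2$ for each $p\ge 1$.

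For this we use the telescoping identity
\[
\tilde{\mathbf{Q}}^p-\mathbf{Q}^p=\sum_{j=0}^{p-1}\tilde{\mathbf{Q}}^{\,j}(\tilde{\mathbf{Q}}-\mathbf{Q})\mathbf{Q}^{\,p-1-j},
\]
which holds without any commutativity, since the middle terms cancel pairwise. Each summand has spectral norm at most $\|\tilde{\mathbf{Q}}\|_2^{\,j}\|\mathbf{Q}\|_2^{\,p-1-j}\|\Delta\mathbf{Q}\|_2\le R_k^{p-1}\|\Delta\mathbf{Q}\|_2$ by submultiplicativity and the definition of $R_k$. There are $p$ summands, so
\[
\|(\tilde{\mathbf{Q}}^{(k)})^p-(\mathbf{Q}^{(k)})^p\|_2\le p\,R_k^{p-1}\|\Delta\mathbf{Q}\|_2 .
\]
This step is where the factor $p R_k^{p-1}$ of $C_k$ comes from. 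Getting the exponents to combine into exactly $R_k^{p-1}$ (rather than, say, $R_k^p$) is the only point requiring care.

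Finally, we bound $\|\Delta\mathbf{Q}\|_2$ using the intermediate estimate from the proof of Theorem~\ref{thm:stability}:
\[
\|\Delta\mathbf{Q}\|_2\le\big(\alpha^{(k)}\sqrt{N}+|1-\alpha^{(k)}-l^{(k)}|\big)\|\mathbf{E}\|_2=\mathcal{K}^{(k)}_{\mathrm{GLGR}}\|\mathbf{E}\|_2 .
\]
That argument uses only $\tilde{\mathbf{D}}-\mathbf{D}=\mathrm{diag}(\mathbf{E}\mathbf{1})$ and is independent of eigenvalues, so it applies verbatim with the layer parameters. Substituting into the sum gives
\[
\|\mathcal{T}_{\tilde{\mathbf{Q}}^{(k)}}(\mathbf{H})-\mathcal{T}_{\mathbf{Q}^{(k)}}(\mathbf{H})\|_F\le\Big(\sum_{p=1}^K pR_k^{p-1}\|\mathbf{\Theta}_p^{(k)}\|_2\Big)\|\mathbf{H}\|_F\,\mathcal{K}^{(k)}_{\mathrm{GLGR}}\|\mathbf{E}\|_2 .
\]
Multiplying by $L_\Phi$ from the first step yields \eqref{eq:agconv_graph_stability}. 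The main obstacle is the non-commutative telescoping bound, and the identity above handles it directly.
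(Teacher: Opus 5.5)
Your proposal is correct and follows the paper's proof essentially step for step. Both arguments use the non-commutative telescoping identity with submultiplicativity to get $p\,R_k^{p-1}\|\Delta\mathbf{Q}^{(k)}\|_2$, then the mixed-norm inequality, the Lipschitz property of $\Phi$, and the operator-norm estimate $\|\Delta\mathbf{Q}^{(k)}\|_2\le\mathcal{K}_{\mathrm{GLGR}}^{(k)}\|\mathbf{E}\|_2$ from the proof of Theorem~\ref{thm:stability}. The only differences are cosmetic: you remove $\Phi$ first rather than last, and you index the telescoping sum in reverse.
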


\begin{proof}
The proof proceeds by bounding the perturbation from the operator level up to the full layer output. Let $\Delta\mathbf{Q}^{(k)} := \tilde{\mathbf{Q}}^{(k)}-\mathbf{Q}^{(k)}$.

\textit{Step 1: Polynomial operator difference.}
Since the $p=0$ term is identical for both operators, only orders $p\ge 1$ contribute to the difference. Using the telescoping sum identity, we have
\begin{equation}
    (\tilde{\mathbf{Q}}^{(k)})^p-(\mathbf{Q}^{(k)})^p
    = \sum_{t=0}^{p-1}(\tilde{\mathbf{Q}}^{(k)})^{p-1-t}\,\Delta\mathbf{Q}^{(k)}\,(\mathbf{Q}^{(k)})^t.
\end{equation}
Applying the submultiplicativity of the spectral norm and the definition of $R_k$ yields:
\begin{equation}\label{eq:poly_diff_bound}
    \begin{aligned}
        &\big\|(\tilde{\mathbf{Q}}^{(k)})^p-(\mathbf{Q}^{(k)})^p\big\|_2 \\
        &\le \sum_{t=0}^{p-1}\big\|(\tilde{\mathbf{Q}}^{(k)})^{p-1-t}\big\|_2\,\big\|\Delta\mathbf{Q}^{(k)}\big\|_2\,\big\|(\mathbf{Q}^{(k)})^t\big\|_2 \\
        &\le \sum_{t=0}^{p-1} R_k^{p-1-t}\,\big\|\Delta\mathbf{Q}^{(k)}\big\|_2\,R_k^t
        = p\,R_k^{p-1}\,\big\|\Delta\mathbf{Q}^{(k)}\big\|_2.
    \end{aligned}
\end{equation}

\textit{Step 2: Pre-activation linear propagation bound.}
The pre-activation map is defined as $\mathcal{T}_{\mathbf{Q}^{(k)}}(\mathbf{H}) := \sum_{p=0}^{K} (\mathbf{Q}^{(k)})^p \mathbf{H}\mathbf{\Theta}_p^{(k)}$. By the mixed-norm inequality $\|\mathbf{A}\mathbf{B}\mathbf{C}\|_{F} \le \|\mathbf{A}\|_2\,\|\mathbf{B}\|_{F}\,\|\mathbf{C}\|_2$ and Eq.~\eqref{eq:poly_diff_bound}, we obtain:
\begin{equation}\label{eq:agconv_linear_stability}
\begin{aligned}
    &\big\|\mathcal{T}_{\tilde{\mathbf{Q}}^{(k)}}(\mathbf{H})-\mathcal{T}_{\mathbf{Q}^{(k)}}(\mathbf{H})\big\|_{F} \\
    &\le \sum_{p=1}^{K}\big\|(\tilde{\mathbf{Q}}^{(k)})^p-(\mathbf{Q}^{(k)})^p\big\|_2\,\|\mathbf{H}\|_{F}\,\|\mathbf{\Theta}_p^{(k)}\|_2 \\
    &\le \|\mathbf{H}\|_{F}\left(\sum_{p=1}^{K} p\,R_k^{p-1}\,\|\mathbf{\Theta}_p^{(k)}\|_2\right)\big\|\Delta\mathbf{Q}^{(k)}\big\|_2 \\
    &= C_k \, \|\mathbf{H}\|_{F} \, \big\|\Delta\mathbf{Q}^{(k)}\big\|_2.
\end{aligned}
\end{equation}

\textit{Step 3: Full layer output bound.}
Given that the activation function $\Phi$ is $L_{\Phi}$-Lipschitz with respect to the Frobenius norm, the output difference is bounded by:
\begin{equation}\label{eq:agconv_full_stability}
\begin{aligned}
    &\big\|\Phi(\mathcal{T}_{\tilde{\mathbf{Q}}^{(k)}}(\mathbf{H}))-\Phi(\mathcal{T}_{\mathbf{Q}^{(k)}}(\mathbf{H}))\big\|_{F} \\
    &\le L_{\Phi} \, \big\|\mathcal{T}_{\tilde{\mathbf{Q}}^{(k)}}(\mathbf{H})-\mathcal{T}_{\mathbf{Q}^{(k)}}(\mathbf{H})\big\|_{F}.
\end{aligned}
\end{equation}
Finally, substituting Eq.~\eqref{eq:agconv_linear_stability} into Eq.~\eqref{eq:agconv_full_stability}, and applying the operator-norm estimate $\big\|\Delta\mathbf{Q}^{(k)}\big\|_2 \le \mathcal{K}_{\mathrm{GLGR}}^{(k)}\,\|\mathbf{E}\|_2$ (established in Theorem~\ref{thm:stability}), we arrive at the desired bound in Eq.~\eqref{eq:agconv_graph_stability}.
\end{proof}

\subsection{Transfer to Diverse Baselines: Common Principle and Model-Specific Instantiation}
The common principle is straightforward: keep each backbone's optimization pipeline, depth, residual/teleport mechanism, and non-propagation modules unchanged, and only replace its core propagation operator by the learned GLGR operator $\mathbf{Q}_{\alpha,l}$ (or inject it into the corresponding polynomial/diffusion kernel).

Concretely, the reported node-classification models follow three reusable templates:
\begin{itemize}
    \item \textbf{First-order message passing} (e.g., GCN/SGC): replace single-step propagation by $\mathbf{Q}_{\alpha,l}$.
    \item \textbf{Polynomial spectral models} (e.g., ChebNet, BernNet, JacobiConv, GPRGNN): keep polynomial order and basis design, but replace the fixed Laplacian/adjacency kernel with powers of $\mathbf{Q}_{\alpha,l}$.
    \item \textbf{Diffusion/propagation models} (e.g., APPNP): replace the fixed propagation matrix in personalized diffusion recursion by $\mathbf{Q}_{\alpha,l}$ while preserving teleport/residual design.
\end{itemize}
This yields a plug-and-play upgrade path: architecture-specific heads remain unchanged, while structural propagation becomes adaptive and data-driven. For backbones whose original derivations rely on a specific normalized operator, polynomial approximation interval, or diffusion matrix---such as ChebNet, BernNet, JacobiConv, GPRGNN, and APPNP---we keep the surrounding architectural template and training pipeline unchanged and replace only the propagation kernel by $\mathbf{Q}_{\alpha,l}$. For cross-backbone spectral-response visualization, we report responses in a shared normalized GLGR coordinate $\bar{\mu}\in[-1,1]$ obtained by affine rescaling of the injected operator spectrum. Table~\ref{tab:backbone_instantiation} summarizes the exact insertion rule used for each backbone in the node-classification experiments.

\begin{table*}[t]
\centering
\caption{Backbone-specific operator-injection instantiation of AG-Conv in node classification. All reported results directly use the GLGR operator $\mathbf{Q}_{\alpha,l}$. The table specifies where $\mathbf{Q}_{\alpha,l}$ is inserted and which architecture-specific mechanisms are kept unchanged.}
\label{tab:backbone_instantiation}
{\footnotesize
\setlength{\tabcolsep}{4pt}
\renewcommand{\arraystretch}{1.08}
\begin{tabularx}{\textwidth}{l|X|X|X}
\toprule
Backbone & GLGR insertion & Components kept unchanged & Parameter sharing \\
\midrule
SGC & Replace the fixed $K$-hop propagation by repeated application of $\mathbf{Q}_{\alpha,l}$. & Hop count, linear classifier, optimizer, and early-stopping rule. & One $(\alpha,l)$ pair for the propagation block, shared across hops and channels. \\
GCN & Replace the first-order propagation operator in each graph layer by $\mathbf{Q}_{\alpha,l}$. & Layer depth, activations, dropout, hidden width, optimizer, and training schedule. & One pair per graph-convolution layer, shared across channels. \\
GPRGNN & Use $\mathbf{Q}_{\alpha,l}$ as the propagation basis in the original GPR polynomial. & Polynomial order, coefficient parameterization, skip/readout design, optimizer, and training schedule. & One pair per propagation block, shared across polynomial orders and channels. \\
APPNP & Replace the fixed diffusion matrix in personalized propagation by $\mathbf{Q}_{\alpha,l}$. & Teleport probability, MLP predictor, iteration count, residual/teleport path, optimizer, and training schedule. & One pair per propagation block, shared across iterations and channels. \\
ChebNet & Build the Chebyshev recursion on $\mathbf{Q}_{\alpha,l}$. & Chebyshev order, coefficient learning, classifier head, optimizer, and training schedule. & One pair per layer, shared across polynomial orders and channels. \\
BernNet & Build the Bernstein polynomial kernel on $\mathbf{Q}_{\alpha,l}$. & Bernstein basis design, order, classifier head, optimizer, and training schedule. & One pair per layer, shared across polynomial orders and channels. \\
JacobiConv & Build the Jacobi polynomial kernel on $\mathbf{Q}_{\alpha,l}$. & Jacobi basis hyperparameters, order, readout head, optimizer, and training schedule. & One pair per layer, shared across polynomial orders and channels. \\
SGFormer & Replace only the local graph-propagation branch by $\mathbf{Q}_{\alpha,l}$; keep the transformer branch untouched. & Attention branch, branch fusion, residual links, optimizer, and training schedule. & One pair per graph branch layer, shared across channels. \\
\bottomrule
\end{tabularx}}
\end{table*}

Accordingly, we do not claim that these GLGR-enhanced variants preserve the original normalization semantics, diffusion interpretation, or approximation-theoretic guarantees attached to their parent operators. Rather, they should be read as empirically matched replacements that test whether a single bounded operator family can serve as a common propagation basis across otherwise unchanged architectures.

\begin{figure}[t]
    \centering
    \includegraphics[width=\linewidth]{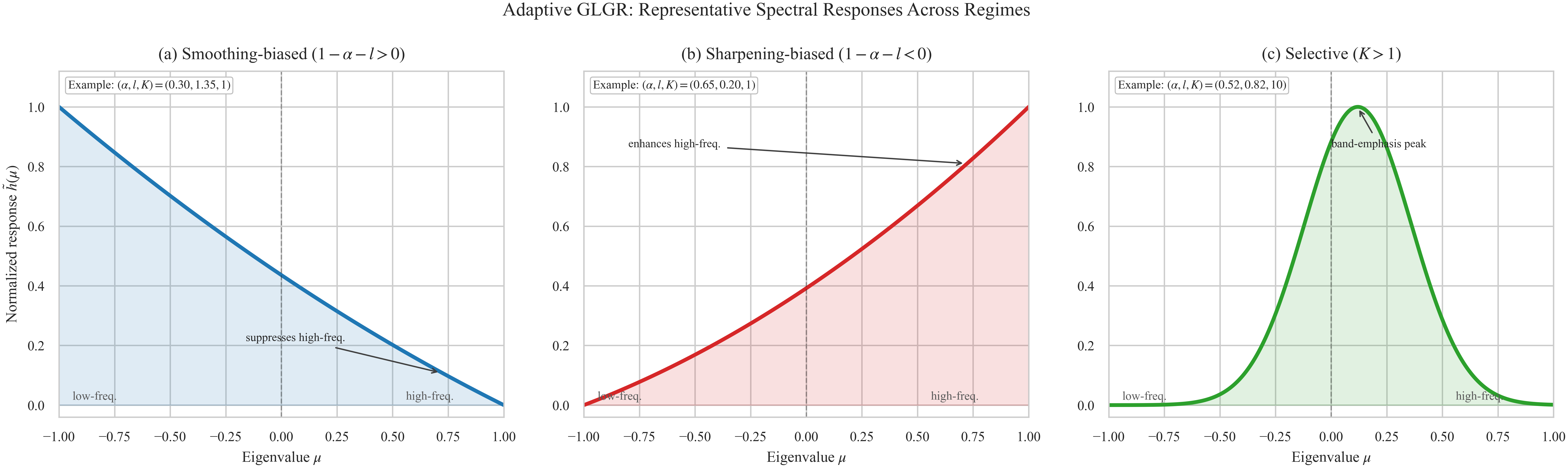}
    \caption{\textbf{Representative spectral responses of Adaptive GLGR.} Illustrative response curves $h(\mu)=\sum_{p=0}^{K}\vartheta_p\mu^p$ are shown on the eigenvalue axis of $\mathbf{Q}_{\alpha,l}$ for three regimes: \textbf{(a)} smoothing-biased ($1-\alpha-l>0$), \textbf{(b)} sharpening-biased ($1-\alpha-l<0$), and \textbf{(c)} selective ($K>1$). Curves are illustrative parameter settings; dataset-specific profiles are learned during training.}
    \label{fig:filter_response}
\end{figure}

\subsection{Frequency Response and Homophily Adaptation}
Let $\mu_i$ denote an eigenvalue of $\mathbf{Q}_{\alpha,l}$. For notational clarity, we use $\{\vartheta_p\}$ (and, layer-wise, $\{\vartheta_p^{(k)}\}$) for scalar coefficients in the one-dimensional spectral response, reserving $\{\mathbf{\Theta}_p^{(k)}\}$ for the matrix-valued trainable weights in Eq.~(\ref{eq:poly_conv}). The scalar spectral response of AG-Conv is
\begin{equation}
    h(\mu_i)=\sum_{p=0}^{K}\vartheta_p\mu_i^p.
\end{equation}
Hence, the final filtering behavior is jointly determined by the operator spectrum $\{\mu_i\}$ and scalar response coefficients $\{\vartheta_p\}$. Importantly, the eigenvalue range is \emph{operator-dependent} rather than fixed: for example, normalized Laplacian eigenvalues lie in $[0,2]$, while normalized adjacency-type operators are often in $[-1,1]$; for the GLGR operator $\mathbf{Q}_{\alpha,l}=\alpha\mathbf{D}+(1-\alpha-l)\mathbf{A}$, the spectrum depends on graph topology and on $(\alpha,l)$, and is therefore not universally constrained to either interval. In addition, AG-Conv does not produce only one response curve: each layer has its own learned $(\alpha^{(k)},l^{(k)})$ and polynomial coefficients $\{\vartheta_p^{(k)}\}$, yielding a layer-specific response $h^{(k)}(\mu)$ (and potentially different responses across runs/datasets). In particular, the term $(1-\alpha-l)$ controls the sign and strength of off-diagonal coupling in $\mathbf{Q}_{\alpha,l}$, while $\alpha$ controls degree-weighted diagonal contribution. Combined with Theorem~\ref{thm:energy} (where $C_{\text{smooth}}=\alpha+l-1=-(1-\alpha-l)$), this yields the following practically useful interpretation:
\begin{itemize}
    \item \textbf{Smoothing-biased regime} ($1-\alpha-l>0$): neighborhood averaging is encouraged, typically favoring low-frequency consistency and often beneficial on homophilous graphs.
    \item \textbf{Sharpening-biased regime} ($1-\alpha-l<0$): contrast across connected nodes is relatively strengthened, which can help preserve boundary/high-frequency components and is often advantageous under heterophily.
    \item \textbf{Selective regime} ($K>1$): higher-order polynomials provide non-monotone responses (e.g., band emphasis), enabling task-dependent frequency selectivity.
\end{itemize}
These regimes indicate spectral bias, while the exact profile is determined by the learned pair $\big(\mathbf{Q}_{\alpha,l},\{\vartheta_p\}\big)$. Fig.~\ref{fig:filter_response} gives representative patterns.

\section{Experimental Validation}

To comprehensively validate GLGR, we design two complementary experimental tracks: (i) \emph{fixed-parameter graph classification} using SCor+SVM with exhaustive grid search in the $(\alpha,l)$ plane, and (ii) \emph{learnable-parameter node classification} by integrating AG-Conv into multiple GNN backbones. This design is intentional: the first track evaluates the \emph{representation capacity} of GLGR itself, while the second track evaluates whether this capacity can be \emph{automatically exploited} by end-to-end learning.

\begin{figure*}[!t]
    \centering
    \includegraphics[width=\linewidth]{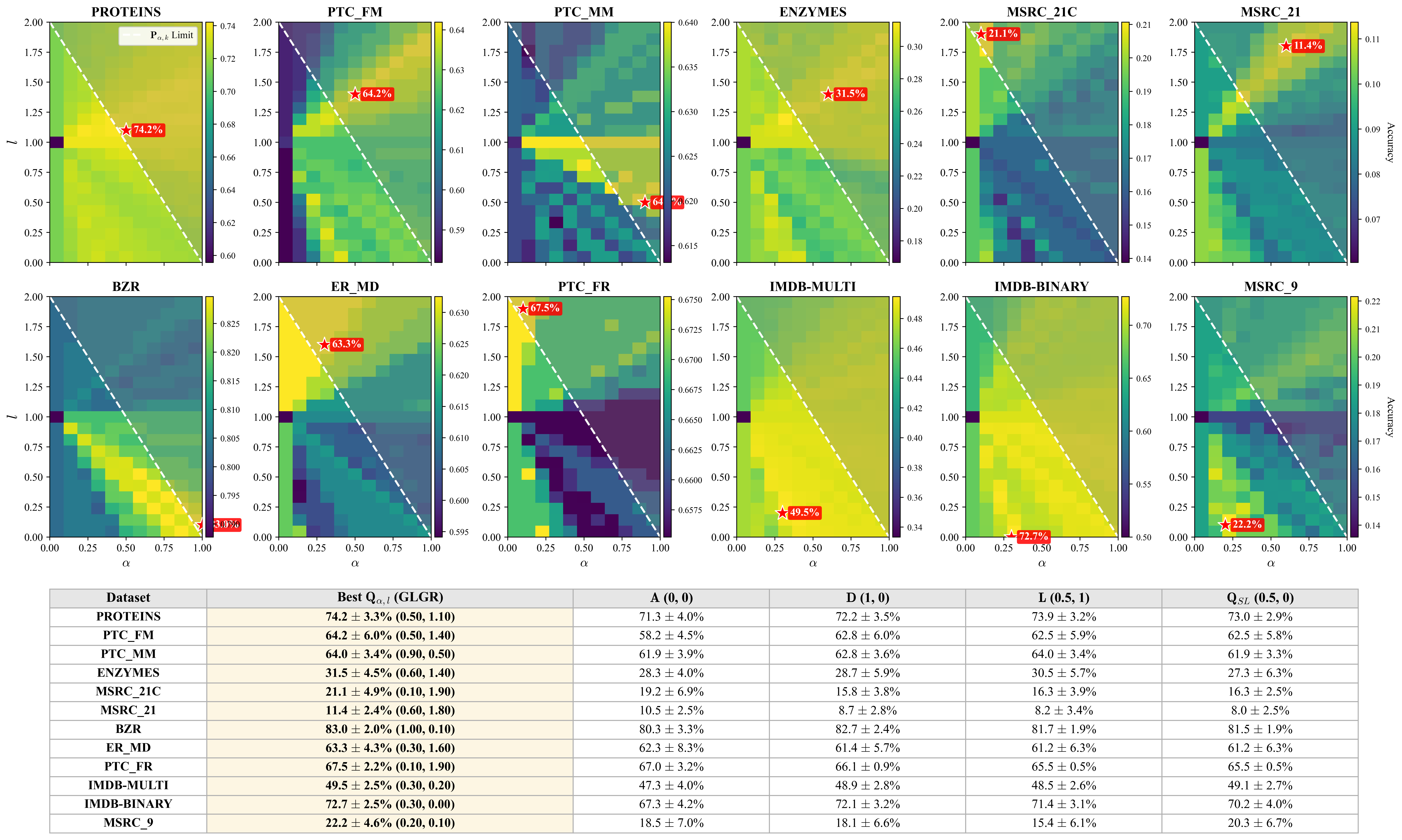}
    \caption{\textbf{Classification performance of GLGR across 12 datasets.} Heatmaps display the accuracy landscapes in the $(\alpha, l)$ space. The white dashed line marks the theoretical upper bound of the prior $\mathbf{P}_{\alpha,k}$ framework, highlighting the GLGR-exclusive zone above it. Red stars ($\star$) indicate the global optima. The bottom table compares GLGR with classical operators, showing strong performance by effectively leveraging negative topological weights within both shared and exclusive parameter domains.}
    \label{fig:glgr_comprehensive_results}
\end{figure*}

\begin{algorithm}[t]
\caption{GLGR Spectral Correlation (SCor) Grid Evaluation}
\label{alg:glgr_search}
\begin{algorithmic}[1]
\REQUIRE Dataset $\mathcal{D}=\{(G_i,y_i)\}_{i=1}^{m}$, parameter grids $\mathcal{A}$ and $\mathcal{L}$, fixed SVM settings $(C\ \text{untuned},\ \gamma=1)$
\ENSURE Accuracy map $\mathrm{Acc}(\alpha,l)$ and best pair $(\alpha^{\star},l^{\star})$
\FOR{each $(\alpha,l)\in\mathcal{A}\times\mathcal{L}$}
    \STATE Construct $\mathbf{Q}_{\alpha,l}(G_i)$ for each graph $G_i$
    \STATE Compute standardized spectra $\tilde{\nu}^{(\alpha,l)}(G_i)$
    \STATE Form kernel matrix $[\mathbf{K}_{\alpha,l}]_{ij}=\exp\!\left(-\gamma\,\mathrm{SCor}_{\alpha,l}(G_i,G_j)\right)$
    \STATE Train and evaluate SVM via one shuffled stratified 10-fold cross-validation split
    \STATE Store mean score in $\mathrm{Acc}(\alpha,l)$
\ENDFOR
\STATE $(\alpha^{\star},l^{\star})\leftarrow \arg\max_{(\alpha,l)}\mathrm{Acc}(\alpha,l)$
\RETURN $\mathrm{Acc}(\alpha,l)$, $(\alpha^{\star},l^{\star})$, and $\max_{(\alpha,l)}\mathrm{Acc}(\alpha,l)$
\end{algorithmic}
\end{algorithm}

\subsection{Graph Classification via SCor+SVM: Representation-Space Validation}
This experiment validates GLGR at the \emph{representation level}, independently of deep model architecture. Following prior spectral-correlation pipelines, we pair a spectral similarity kernel with SVM classification, and evaluate how performance varies across the $(\alpha,l)$ operator plane.

\textbf{Datasets.} We use 12 benchmark graph classification datasets from the standard TUDataset collection \cite{morris2020tudataset}, spanning bio-chemical and social-network domains: PROTEINS, PTC\_FM, PTC\_MM, ENZYMES, MSRC\_21C, MSRC\_21, BZR, ER\_MD, PTC\_FR, IMDB-MULTI, IMDB-BINARY, and MSRC\_9. Their accuracy landscapes are summarized in Fig.~\ref{fig:glgr_comprehensive_results}.

\textbf{SCor similarity under GLGR.} For graph $G$, let
\begin{equation}
\mathbf{Q}_{\alpha,l}(G)=\alpha\mathbf{D}(G)+(1-\alpha-l)\mathbf{A}(G),
\end{equation}
and let $\nu^{(\alpha,l)}(G)\in\mathbb{R}^{n_{\max}}$ be its eigenvalue vector after zero-padding/truncation to a common size $n_{\max}$, followed by standardization:
\begin{equation}
\tilde{\nu}^{(\alpha,l)}(G)=\frac{\nu^{(\alpha,l)}(G)-m_G\mathbf{1}}{s_G+\varepsilon}.
\end{equation}
Here $m_G$ and $s_G$ denote the mean and standard deviation of $\nu^{(\alpha,l)}(G)$, respectively, and $\mathbf{1}$ is the all-ones vector.
For graphs $G_i,G_j$, the GLGR spectral-correlation distance is
\begin{equation}
\mathrm{SCor}_{\alpha,l}(G_i,G_j)=\sqrt{1-\left(\frac{1}{n_{\max}}\left\langle \tilde{\nu}^{(\alpha,l)}(G_i),\tilde{\nu}^{(\alpha,l)}(G_j)\right\rangle\right)^2}.
\end{equation}
The SVM Gram matrix is then defined by an RBF-type map
\begin{equation}
[\mathbf{K}_{\alpha,l}]_{ij}=\exp\big(-\gamma\,\mathrm{SCor}_{\alpha,l}(G_i,G_j)\big),
\end{equation}
where $\gamma>0$ controls kernel sharpness. In our experiments, we fix $\gamma=1$ for all datasets.

\textbf{Protocol.} We perform uniform grid search with $\alpha\in[0,1]$, $l\in[0,2]$, step $0.1$. For each pair $(\alpha,l)$, we compute $\mathbf{K}_{\alpha,l}$ and evaluate an SVM using a single shuffled stratified 10-fold cross-validation split (implemented by StratifiedKFold with $\texttt{n\_splits}=10$, $\texttt{shuffle}=\texttt{True}$, and $\texttt{random\_state}=42$). The reported score is the mean accuracy across the 10 folds. To isolate the effect of the GLGR representation space, we do not tune the SVM penalty parameter $C$, and we keep $\gamma=1$ fixed for all datasets.
The full evaluation workflow is summarized in Algorithm~\ref{alg:glgr_search}.

\begin{table*}[t]
\centering
\caption{Statistics of the 13 node-classification datasets used in our experiments.}
\label{tab:node_dataset_stats}
{\footnotesize
\setlength{\tabcolsep}{3.5pt}
\renewcommand{\arraystretch}{1.05}
\resizebox{\textwidth}{!}{%
\begin{tabular}{l|*{13}{c}}
\toprule
Dataset & Cora & Citeseer & PubMed & Photo & Computers & Physics & WikiCS & Texas & Wisconsin & Cornell & Actor & Chameleon & Squirrel \\
\midrule
Type & Hom. & Hom. & Hom. & Hom. & Hom. & Hom. & Hom. & Het. & Het. & Het. & Het. & Het. & Het. \\
\#Nodes & 2,708 & 3,327 & 19,717 & 7,650 & 13,752 & 34,493 & 11,701 & 183 & 251 & 183 & 7,600 & 2,277 & 5,201 \\
\#Edges & 5,278 & 4,552 & 44,324 & 119,081 & 245,861 & 247,962 & 216,123 & 279 & 466 & 277 & 26,659 & 31,371 & 198,353 \\
\#Features & 1,433 & 3,703 & 500 & 745 & 767 & 8,415 & 300 & 1,703 & 1,703 & 1,703 & 932 & 2,325 & 2,089 \\
\#Classes & 7 & 6 & 3 & 8 & 10 & 5 & 10 & 5 & 5 & 5 & 5 & 5 & 5 \\
\bottomrule
\end{tabular}}
}
\end{table*}

The resulting maps support three main observations. First, the best points often lie away from the classical corners (pure $\mathbf{A}$/$\mathbf{L}$), which indicates that intermediate GLGR operators capture additional discriminative information. Second, several datasets attain their optimum in GLGR-exclusive regions beyond the compact prior $\mathbf{P}_{\alpha,k}$ boundary (white dashed line in Fig.~\ref{fig:glgr_comprehensive_results}), so the improvement cannot be reduced to a simple reparameterization effect. Third, high-accuracy regions are usually broad rather than spiky, which is consistent with the bounded linear parameterization and with stable operator selection. Together, these results show that GLGR already enlarges the useful spectral search space before end-to-end learning is introduced.

% homophilous datasets
\begin{table*}[t]
\centering
\caption{Node classification results on homophilous datasets: mean accuracy (\%) $\pm$ std. Within each baseline/GLGR backbone pair, the better result is shown in \textbf{bold}.}
\label{tab:node_homophilous}
\resizebox{\textwidth}{!}{
\begin{tabular}{c|ccccccc}
\toprule
Model & Cora & Citeseer & PubMed & Photo & Computers & Physics & WikiCS \\
\midrule
SGC & 81.19{\scriptsize $\pm$0.70} & 78.02{\scriptsize $\pm$0.73} & 84.99{\scriptsize $\pm$0.32} & 90.78{\scriptsize $\pm$1.60} & 82.77{\scriptsize $\pm$0.93} & 93.32{\scriptsize $\pm$0.09} & 79.22{\scriptsize $\pm$0.47} \\
GLGR-SGC & \textbf{86.97}{\scriptsize $\pm$0.81} & \textbf{79.28}{\scriptsize $\pm$0.79} & \textbf{87.16}{\scriptsize $\pm$0.26} & \textbf{91.96}{\scriptsize $\pm$0.34} & \textbf{85.48}{\scriptsize $\pm$0.48} & \textbf{96.09}{\scriptsize $\pm$0.09} & \textbf{80.68}{\scriptsize $\pm$0.45} \\
\midrule
GCN & 86.81{\scriptsize $\pm$0.61} & 78.51{\scriptsize $\pm$0.41} & 86.03{\scriptsize $\pm$0.28} & 88.61{\scriptsize $\pm$0.74} & 83.45{\scriptsize $\pm$0.16} & 95.68{\scriptsize $\pm$0.11} & 81.53{\scriptsize $\pm$0.36} \\
GLGR-GCN & \textbf{86.98}{\scriptsize $\pm$0.57} & \textbf{79.73}{\scriptsize $\pm$0.38} & \textbf{88.52}{\scriptsize $\pm$0.25} & \textbf{94.22}{\scriptsize $\pm$0.32} & \textbf{89.63}{\scriptsize $\pm$0.29} & \textbf{96.83}{\scriptsize $\pm$0.11} & \textbf{82.60}{\scriptsize $\pm$0.56} \\
\midrule
GPRGNN & 87.93{\scriptsize $\pm$0.79} & 79.02{\scriptsize $\pm$0.51} & 87.77{\scriptsize $\pm$0.40} & 93.69{\scriptsize $\pm$0.36} & 87.19{\scriptsize $\pm$0.68} & 96.81{\scriptsize $\pm$0.08} & 81.31{\scriptsize $\pm$0.43}\\
GLGR-GPRGNN & \textbf{88.77}{\scriptsize $\pm$0.58} & \textbf{80.43}{\scriptsize $\pm$0.35} & \textbf{88.82}{\scriptsize $\pm$0.33} & \textbf{94.68}{\scriptsize $\pm$0.25} & \textbf{89.22}{\scriptsize $\pm$0.49} & \textbf{97.47}{\scriptsize $\pm$0.11} & \textbf{82.32}{\scriptsize $\pm$0.36} \\
\midrule
APPNP & 88.41{\scriptsize $\pm$0.85} & 80.60{\scriptsize $\pm$0.76} & 86.21{\scriptsize $\pm$0.31} & 88.67{\scriptsize $\pm$0.42} & 85.39{\scriptsize $\pm$0.29} & 96.08{\scriptsize $\pm$0.12} & 79.90{\scriptsize $\pm$0.46} \\
GLGR-APPNP & \textbf{88.80}{\scriptsize $\pm$0.69} & \textbf{80.94}{\scriptsize $\pm$0.70} & \textbf{88.75}{\scriptsize $\pm$0.32} & \textbf{94.55}{\scriptsize $\pm$0.32} & \textbf{88.26}{\scriptsize $\pm$0.35} & \textbf{97.12}{\scriptsize $\pm$0.11} & \textbf{81.51}{\scriptsize $\pm$0.31} \\
\midrule
CHEBNET & 86.51{\scriptsize $\pm$0.63} & 78.74{\scriptsize $\pm$0.46} & 88.31{\scriptsize $\pm$0.20} & 93.74{\scriptsize $\pm$0.30} & 87.84{\scriptsize $\pm$0.40} & 97.25{\scriptsize $\pm$0.78} & 81.38{\scriptsize $\pm$0.41} \\
GLGR-CHEBNET & \textbf{87.29}{\scriptsize $\pm$0.81} & \textbf{80.82}{\scriptsize $\pm$1.50} & \textbf{89.35}{\scriptsize $\pm$0.46} & \textbf{94.91}{\scriptsize $\pm$0.34} & \textbf{89.16}{\scriptsize $\pm$0.89} & \textbf{97.62}{\scriptsize $\pm$0.15} & \textbf{83.62}{\scriptsize $\pm$0.41} \\
\midrule
BERNNET & 88.10{\scriptsize $\pm$0.85} & 79.86{\scriptsize $\pm$0.85} & 86.50{\scriptsize $\pm$0.28} & 93.62{\scriptsize $\pm$0.28} & 86.78{\scriptsize $\pm$0.44} & 96.37{\scriptsize $\pm$0.13} & 80.16{\scriptsize $\pm$0.48} \\
GLGR-BERNNET & \textbf{88.26}{\scriptsize $\pm$0.82} & \textbf{80.04}{\scriptsize $\pm$0.65} & \textbf{88.94}{\scriptsize $\pm$0.19} & \textbf{94.69}{\scriptsize $\pm$0.24} & \textbf{88.93}{\scriptsize $\pm$0.40} & \textbf{97.72}{\scriptsize $\pm$0.24} & \textbf{81.78}{\scriptsize $\pm$0.19} \\
\midrule
JACOBICONV & 87.89{\scriptsize $\pm$0.67} & 77.96{\scriptsize $\pm$0.45} & 87.37{\scriptsize $\pm$0.22} & 93.54{\scriptsize $\pm$0.30} & 87.18{\scriptsize $\pm$0.39} & 96.63{\scriptsize $\pm$0.11} & 80.92{\scriptsize $\pm$0.39} \\
GLGR-JACOBICONV & \textbf{88.34}{\scriptsize $\pm$0.60} & \textbf{78.98}{\scriptsize $\pm$0.53} & \textbf{89.78}{\scriptsize $\pm$0.20} & \textbf{95.20}{\scriptsize $\pm$0.29} & \textbf{90.37}{\scriptsize $\pm$0.32} & \textbf{97.95}{\scriptsize $\pm$0.10} & \textbf{81.98}{\scriptsize $\pm$0.37} \\
\midrule
SGFORMER & 85.59{\scriptsize $\pm$1.15} & 77.89{\scriptsize $\pm$0.94} & 87.02{\scriptsize $\pm$0.38} & 91.05{\scriptsize $\pm$0.32} & 84.99{\scriptsize $\pm$0.59} & 95.78{\scriptsize $\pm$0.23} & 81.69{\scriptsize $\pm$0.56} \\
GLGR-SGFORMER & \textbf{86.53}{\scriptsize $\pm$0.91} & \textbf{79.15}{\scriptsize $\pm$0.83} & \textbf{89.40}{\scriptsize $\pm$0.34} & \textbf{94.20}{\scriptsize $\pm$0.35} & \textbf{87.51}{\scriptsize $\pm$0.92} & \textbf{98.15}{\scriptsize $\pm$0.16} & \textbf{82.62}{\scriptsize $\pm$0.47} \\
\bottomrule
\end{tabular}}
\end{table*}

% heterophilous datasets
\begin{table*}[t!]
\centering
\caption{Node classification results on heterophilous datasets: mean accuracy (\%) $\pm$ std. Within each baseline/GLGR backbone pair, the better result is shown in \textbf{bold}.}
\label{tab:node_heterophilous}
\resizebox{\textwidth}{!}{
\begin{tabular}{c|cccccc}
\toprule
Model & Texas & Wisconsin & Cornell & Actor & Chameleon & Squirrel \\
\midrule
SGC & 83.29{\scriptsize $\pm$1.57} & 84.42{\scriptsize $\pm$1.03} & 81.46{\scriptsize $\pm$1.87} & 29.59{\scriptsize $\pm$2.65} & 41.49{\scriptsize $\pm$1.27} & 26.75{\scriptsize $\pm$0.66} \\
GLGR-SGC & \textbf{85.07}{\scriptsize $\pm$2.09} & \textbf{86.32}{\scriptsize $\pm$1.18} & \textbf{84.04}{\scriptsize $\pm$1.94} & \textbf{30.69}{\scriptsize $\pm$1.17} & \textbf{44.19}{\scriptsize $\pm$1.26} & \textbf{28.21}{\scriptsize $\pm$0.57} \\
\midrule
GCN & 77.34{\scriptsize $\pm$3.04} & 89.51{\scriptsize $\pm$3.14} & 79.45{\scriptsize $\pm$2.94} & 35.74{\scriptsize $\pm$0.78} & 58.62{\scriptsize $\pm$1.81} & 44.33{\scriptsize $\pm$0.61} \\
GLGR-GCN & \textbf{81.04}{\scriptsize $\pm$3.64} & \textbf{92.43}{\scriptsize $\pm$1.68} & \textbf{88.69}{\scriptsize $\pm$2.55} & \textbf{37.82}{\scriptsize $\pm$0.55} & \textbf{59.93}{\scriptsize $\pm$1.39} & \textbf{46.61}{\scriptsize $\pm$0.60} \\
\midrule
GPRGNN & 89.18{\scriptsize $\pm$1.79} & 79.09{\scriptsize $\pm$3.20} & 86.33{\scriptsize $\pm$3.10} & 39.67{\scriptsize $\pm$0.95} & 59.24{\scriptsize $\pm$1.92} & 46.30{\scriptsize $\pm$1.16} \\
GLGR-GPRGNN & \textbf{91.59}{\scriptsize $\pm$1.41} & \textbf{84.57}{\scriptsize $\pm$3.70} & \textbf{88.97}{\scriptsize $\pm$1.95} & \textbf{40.80}{\scriptsize $\pm$0.52} & \textbf{62.80}{\scriptsize $\pm$0.72} & \textbf{48.92}{\scriptsize $\pm$1.12} \\
\midrule
APPNP & 86.39{\scriptsize $\pm$2.46} & 87.18{\scriptsize $\pm$3.68} & 89.34{\scriptsize $\pm$1.96} & 40.02{\scriptsize $\pm$0.70} & 49.75{\scriptsize $\pm$1.31} & 32.34{\scriptsize $\pm$0.76} \\
GLGR-APPNP & \textbf{90.66}{\scriptsize $\pm$2.45} & \textbf{92.42}{\scriptsize $\pm$1.33} & \textbf{91.31}{\scriptsize $\pm$1.31} & \textbf{40.66}{\scriptsize $\pm$0.74} & \textbf{50.50}{\scriptsize $\pm$0.65} & \textbf{32.77}{\scriptsize $\pm$0.55} \\
\midrule
CHEBNET & 85.79{\scriptsize $\pm$2.89} & 88.00{\scriptsize $\pm$2.43} & 82.38{\scriptsize $\pm$2.65} & 37.32{\scriptsize $\pm$0.53} & 58.09{\scriptsize $\pm$0.91} & 39.70{\scriptsize $\pm$1.21} \\
GLGR-CHEBNET & \textbf{88.77}{\scriptsize $\pm$4.19} & \textbf{89.58}{\scriptsize $\pm$4.27} & \textbf{84.27}{\scriptsize $\pm$3.64} & \textbf{38.92}{\scriptsize $\pm$0.65} & \textbf{64.33}{\scriptsize $\pm$1.66} & \textbf{54.93}{\scriptsize $\pm$1.31} \\
\midrule
BERNNET & 91.96{\scriptsize $\pm$2.42} & 91.78{\scriptsize $\pm$1.98} & 88.85{\scriptsize $\pm$2.29} & 40.23{\scriptsize $\pm$0.77} & 60.31{\scriptsize $\pm$1.32} & 44.97{\scriptsize $\pm$0.97} \\
GLGR-BERNNET & \textbf{93.36}{\scriptsize $\pm$1.36} & \textbf{92.66}{\scriptsize $\pm$0.89} & \textbf{89.84}{\scriptsize $\pm$1.96} & \textbf{40.97}{\scriptsize $\pm$0.81} & \textbf{63.36}{\scriptsize $\pm$1.62} & \textbf{49.00}{\scriptsize $\pm$3.56} \\
\midrule
JACOBICONV & 83.95{\scriptsize $\pm$5.18} & 87.46{\scriptsize $\pm$2.57} & 86.80{\scriptsize $\pm$2.84} & 37.06{\scriptsize $\pm$0.74} & 63.77{\scriptsize $\pm$1.25} & 51.26{\scriptsize $\pm$2.92} \\
GLGR-JACOBICONV & \textbf{88.45}{\scriptsize $\pm$2.24} & \textbf{89.44}{\scriptsize $\pm$2.29} & \textbf{88.17}{\scriptsize $\pm$1.79} & \textbf{40.15}{\scriptsize $\pm$0.44} & \textbf{68.33}{\scriptsize $\pm$1.03} & \textbf{55.04}{\scriptsize $\pm$1.27} \\
\midrule
SGFORMER & 90.60{\scriptsize $\pm$2.07} & 92.89{\scriptsize $\pm$2.23} & 91.31{\scriptsize $\pm$2.30} & 37.58{\scriptsize $\pm$1.01} & 47.99{\scriptsize $\pm$1.49} & 31.07{\scriptsize $\pm$0.52} \\
GLGR-SGFORMER & \textbf{93.04}{\scriptsize $\pm$1.33} & \textbf{94.55}{\scriptsize $\pm$1.84} & \textbf{92.91}{\scriptsize $\pm$2.71} & \textbf{38.37}{\scriptsize $\pm$1.15} & \textbf{55.44}{\scriptsize $\pm$1.06} & \textbf{32.58}{\scriptsize $\pm$1.24} \\
\bottomrule
\end{tabular}}
\end{table*}

\subsection{Adaptive GLGR-GNN: Learnability and Generalization}
We next evaluate whether the representation-space advantage of GLGR can be realized through end-to-end learning. In this setting, $(\alpha,l)$ are optimized jointly with network parameters, and GLGR is injected into multiple backbones following Section V.

\textbf{Dataset Description.} We conduct node classification experiments on 13 widely used benchmarks, including seven homophilous datasets (Cora, Citeseer, and PubMed \cite{yang2016revisiting}; Photo, Computers, and Physics \cite{shchur2018pitfalls}; and WikiCS \cite{mernyei2020wikics}) and six heterophilous datasets (Texas, Wisconsin, Cornell, Actor, Chameleon, and Squirrel \cite{pei2020geom}). This split allows us to evaluate whether adaptive GLGR can simultaneously handle smooth low-frequency propagation and heterophily-driven high-frequency discrimination. Table~\ref{tab:node_dataset_stats} summarizes the benchmark versions used in our experiments, including graph scale, attribute dimension and class count. The wide spread in both graph size and homophily makes this suite a suitable testbed for checking whether a learnable propagation operator can adapt across markedly different structural regimes rather than overfit to a single benchmark family.

\textbf{Baselines and Settings.} We consider eight representative backbones: SGC \cite{wu2019simplifying}, GCN \cite{kipf2017semi}, GPRGNN \cite{chien2021adaptive}, APPNP \cite{klicpera2019predict}, CHEBNET \cite{defferrard2016convolutional}, BERNNET \cite{he2021bernnet}, JACOBICONV \cite{wang2022powerful}, and SGFORMER \cite{wu2023sgformer}. For each backbone, the GLGR-enhanced variant is obtained by replacing only the original fixed propagation operator with the adaptive GLGR instantiation summarized in Table~\ref{tab:backbone_instantiation}, while preserving the corresponding architecture, depth, optimizer choice, and training schedule.

\begin{figure*}[t]
    \centering
    \includegraphics[width=\textwidth]{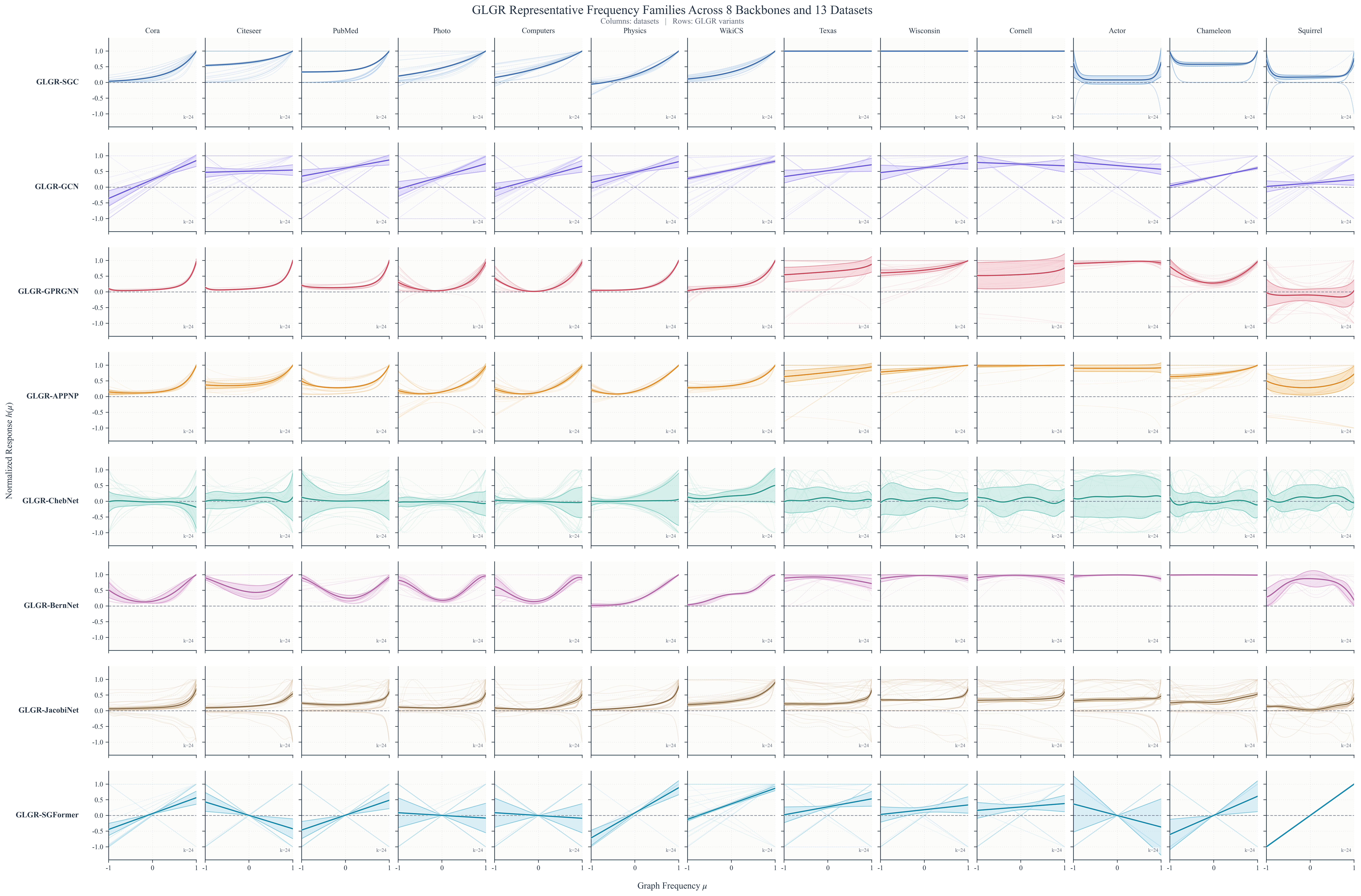}
    \caption{\textbf{Cross-dataset summary of learned GLGR spectral-response curves.} Rows denote GLGR-enhanced backbones and columns denote datasets. In each cell, thin curves show sampled component-level responses, the thick curve shows the mean response, and the shaded band marks one standard deviation across runs.}
    \label{fig:glgr_frequency_grid}
\end{figure*}

To ensure a fair comparison, we do not perform any additional backbone-specific hyperparameter search in our implementation. Instead, for every backbone we directly adopt the default hyperparameter settings provided in the public BernNet codebase and keep them fixed for both the original baseline and its GLGR-enhanced counterpart. In particular, the learning rate, weight decay, dropout, hidden width, depth/order, propagation steps, early-stopping strategy, and all other architecture-specific settings are kept identical between each baseline and the corresponding GLGR variant. The GLGR-enhanced model introduces only the structural scalars $(a^{(k)},b^{(k)})$---equivalently $(\alpha^{(k)},l^{(k)})$---while the remaining model parameters and training settings are inherited unchanged from the matched baseline. All experiments are implemented in Python 3.9.12 with PyTorch 1.11.0+cu113 and are run on a workstation equipped with a single NVIDIA RTX 3060 GPU and a 12th Gen Intel(R) Core(TM) i5-12400F CPU. We split nodes into train/validation/test sets with a $60\%/20\%/20\%$ ratio and generate 10 random splits; all methods are evaluated on exactly the same 10 splits for strict comparability. Results are reported as mean classification accuracy $\pm$ standard deviation across the 10 runs. For polynomial GNNs, we fix the order to $K=10$ following common practice. Consequently, the reported gains compare each GLGR-enhanced model against its corresponding baseline under exactly the same default hyperparameter configuration, rather than under separately tuned settings.

\textbf{Results and Discussion.} The pairwise comparisons in Tables~\ref{tab:node_homophilous} and \ref{tab:node_heterophilous} show consistent gains of GLGR-enhanced variants over their original baselines. On homophilous datasets, improvements are stable and moderate, indicating that GLGR preserves beneficial smoothing while refining spectral selectivity. On heterophilous datasets, gains are typically larger (especially on Texas, Wisconsin, and Cornell), where fixed low-pass propagation is known to be suboptimal \cite{nt2019revisiting,zhu2020beyond,chien2021adaptive}. These observations agree with the analysis in Section V: adaptive learning of $(\alpha,l)$ enables data-dependent frequency responses, allowing each backbone to move away from one-size-fits-all smoothing. Importantly, these gains are obtained under the matched default-setting protocol above, so they cannot be attributed to extra hyperparameter tuning or to backbone redesign outside the propagation operator. For backbones whose original formulations are tied to specific normalization, diffusion, or polynomial-approximation assumptions, the present comparisons should therefore be interpreted in the operator-injection sense of Section~V: GLGR acts as a matched empirical replacement of the propagation basis within an otherwise unchanged architecture, rather than as a claim that the parent model's original operator-specific semantics are preserved verbatim. Overall, this experiment confirms both \emph{learnability} and \emph{generalization}: GLGR is trainable in end-to-end settings and transferable across heterogeneous model families.

\subsection{Cross-Dataset Frequency-Response Summary}
To complement the accuracy tables with a direct view of the learned propagation behavior, Fig.~\ref{fig:glgr_frequency_grid} presents a cross-dataset summary of the learned GLGR spectral-response curves across all eight GLGR-enhanced node-classification backbones and thirteen datasets. Each cell corresponds to one backbone--dataset pair. For visual readability, the thin curves show a small sampled subset of component-level responses rescaled for visualization, the thick curve shows the mean overall response, and the shaded band indicates one standard deviation across runs. The horizontal axis uses the shared normalized coordinate $\bar{\mu}\in[-1,1]$, so all responses are displayed on the same GLGR scale. The figure gives a compact qualitative summary of how GLGR shapes spectral preference under different model and data conditions.

Fig.~\ref{fig:glgr_frequency_grid} suggests that the learned GLGR responses are not concentrated around a single universal spectral pattern. Across many backbone--dataset pairs, the curves occupy related but distinguishable spectral regions, indicating that the learned operator can adjust its propagation preference to different model and data contexts rather than reverting to one fixed bias. In a large portion of the grid, the sampled component-level curves remain centered around a relatively smooth mean trajectory and the uncertainty bands stay within a moderate range, which is consistent with generally stable optimization in the compact GLGR space. Meanwhile, the width and shape of the response family still vary across backbone--dataset pairs, suggesting that GLGR retains enough flexibility to reflect graph-dependent structural differences while remaining within a compact and interpretable operator space. Taken together, the figure provides a qualitative complement to the quantitative results by showing that GLGR offers a compact operator space in which different backbones can learn different spectral preferences across datasets.

\begin{table*}[t]
\centering
\caption{Unified node-classification accuracy comparison (\%) on seven homophilous and six heterophilous benchmarks. The last row reports the strongest GLGR-enhanced result on each dataset, and the superscript indicates the corresponding backbone.}
\label{tab:unified_large_comparison}

{\footnotesize
\setlength{\tabcolsep}{2pt}          
\renewcommand{\arraystretch}{1.05}
\newcommand{\best}[1]{\textbf{#1}}
\newcommand{\second}[1]{\underline{#1}}

\resizebox{\textwidth}{!}{       
\begin{tabular}{c|ccccccc|cccccc}
\toprule
\multirow{2}{*}{Method} & \multicolumn{7}{c|}{Homophilous Datasets} & \multicolumn{6}{c}{Heterophilous Datasets} \\
 & Cora & Citeseer & PubMed & Photo & Computers & Physics & WikiCS & Texas & Wisconsin & Cornell & Actor & Chameleon & Squirrel \\
\midrule

MLP & 75.92{\tiny $\pm$0.95} & 76.34{\tiny $\pm$1.04} & 84.34{\tiny $\pm$0.41} & 84.98{\tiny $\pm$0.42} & 83.11{\tiny $\pm$0.21} & 94.55{\tiny $\pm$0.13} & 75.53{\tiny $\pm$0.45} & 88.38{\tiny $\pm$2.41} & \second{90.11}{\tiny $\pm$2.49} & \second{89.71}{\tiny $\pm$2.09} & 39.15{\tiny $\pm$0.64} & 47.64{\tiny $\pm$1.06} & 31.50{\tiny $\pm$0.62} \\

GAT & 88.19{\tiny $\pm$0.60} & 80.70{\tiny $\pm$0.67} & 85.56{\tiny $\pm$0.43} & 91.69{\tiny $\pm$0.66} & \second{84.21}{\tiny $\pm$0.53} & 95.95{\tiny $\pm$0.12} & 81.85{\tiny $\pm$0.29} & 75.64{\tiny $\pm$1.80} & 66.83{\tiny $\pm$1.97} & 66.92{\tiny $\pm$3.30} & 35.17{\tiny $\pm$0.41} & 59.96{\tiny $\pm$1.32} & 33.18{\tiny $\pm$0.99} \\

H2GCN & 87.96{\tiny $\pm$0.37} & 80.90{\tiny $\pm$1.21} & 89.18{\tiny $\pm$0.28} & \second{95.45}{\tiny $\pm$0.67} & - & 97.19{\tiny $\pm$0.13} & 83.45{\tiny $\pm$0.26} & \second{91.89}{\tiny $\pm$3.93} & 86.67{\tiny $\pm$4.69} & 84.05{\tiny $\pm$5.30} & 36.31{\tiny $\pm$2.58} & 61.20{\tiny $\pm$4.28} & 39.53{\tiny $\pm$0.88} \\

HopGNN & \second{88.68}{\tiny $\pm$1.06} & 80.38{\tiny $\pm$0.68} & 89.15{\tiny $\pm$0.35} & 94.49{\tiny $\pm$0.33} & - & 97.86{\tiny $\pm$0.16} & \best{84.73}{\tiny $\pm$0.59} & 89.15{\tiny $\pm$4.04} & 85.69{\tiny $\pm$5.43} & 83.51{\tiny $\pm$5.19} & 39.33{\tiny $\pm$2.79} & \second{65.25}{\tiny $\pm$3.49} & \best{57.83}{\tiny $\pm$2.11} \\

Transformer & 71.83{\tiny $\pm$1.68} & 70.55{\tiny $\pm$1.20} & 86.66{\tiny $\pm$0.50} & 89.58{\tiny $\pm$1.05} & - & OOM & 77.36{\tiny $\pm$1.25} & 88.75{\tiny $\pm$6.30} & - & - & 39.95{\tiny $\pm$0.64} & 45.21{\tiny $\pm$2.01} & 33.17{\tiny $\pm$1.32} \\

GraphGPS & 83.42{\tiny $\pm$1.22} & 75.87{\tiny $\pm$0.71} & 86.62{\tiny $\pm$0.53} & 94.35{\tiny $\pm$0.25} & - & 97.60{\tiny $\pm$0.05} & 79.26{\tiny $\pm$0.57} & 83.71{\tiny $\pm$5.85} & - & - & 37.68{\tiny $\pm$0.94} & 46.07{\tiny $\pm$1.51} & 34.14{\tiny $\pm$0.73} \\

NodeFormer & 87.32{\tiny $\pm$0.92} & 79.56{\tiny $\pm$1.10} & 89.24{\tiny $\pm$0.23} & 95.27{\tiny $\pm$0.22} & - & 96.45{\tiny $\pm$0.28} & 81.03{\tiny $\pm$0.94} & 84.63{\tiny $\pm$3.47} & - & - & 35.17{\tiny $\pm$0.41} & 56.34{\tiny $\pm$1.11} & 43.42{\tiny $\pm$1.62} \\

NAGphormer & 88.15{\tiny $\pm$1.35} & 80.12{\tiny $\pm$1.24} & 89.70{\tiny $\pm$0.19} & \best{95.49}{\tiny $\pm$0.11} & - & 97.85{\tiny $\pm$0.26} & 83.41{\tiny $\pm$0.34} & 91.80{\tiny $\pm$1.85} & - & - & 40.08{\tiny $\pm$1.50} & 54.92{\tiny $\pm$1.11} & 48.55{\tiny $\pm$2.56} \\

PolyFormer & 87.67{\tiny $\pm$1.28} & \best{81.80}{\tiny $\pm$0.76} & \best{90.68}{\tiny $\pm$0.31} & 94.08{\tiny $\pm$1.37} & - & \second{98.08}{\tiny $\pm$0.27} & \second{83.62}{\tiny $\pm$0.17} & 89.02{\tiny $\pm$5.44} & - & - & \best{41.51}{\tiny $\pm$0.71} & 60.17{\tiny $\pm$1.39} & 44.98{\tiny $\pm$3.03} \\

\midrule
\textbf{Ours}{\tiny (Best GLGR)}
& \best{88.80}{\tiny $\pm$0.69}$^{\mathrm{A}}$
& \second{80.94}{\tiny $\pm$0.70}$^{\mathrm{A}}$
& \second{89.78}{\tiny $\pm$0.20}$^{\mathrm{B}}$
& 95.20{\tiny $\pm$0.29}$^{\mathrm{B}}$
& \best{90.37}{\tiny $\pm$0.32}$^{\mathrm{B}}$
& \best{98.15}{\tiny $\pm$0.16}$^{\mathrm{C}}$
& \second{83.62}{\tiny $\pm$0.41}$^{\mathrm{D}}$
& \best{93.36}{\tiny $\pm$1.36}$^{\mathrm{E}}$
& \best{94.55}{\tiny $\pm$1.84}$^{\mathrm{C}}$
& \best{92.91}{\tiny $\pm$2.71}$^{\mathrm{C}}$
& \second{40.97}{\tiny $\pm$0.81}$^{\mathrm{E}}$
& \best{68.33}{\tiny $\pm$1.03}$^{\mathrm{B}}$
& \second{55.04}{\tiny $\pm$1.27}$^{\mathrm{B}}$ \\
\bottomrule
\end{tabular}
}
}

\vspace{3pt}

{\footnotesize
\textbf{Note:} Bold and underlined entries denote the best and second-best mean accuracy in each column. ``-'' indicates unavailable results, and ``OOM'' indicates out-of-memory. \textbf{GLGR code:} $\mathrm{A}$=GLGR-APPNP; $\mathrm{B}$=GLGR-JACOBICONV; $\mathrm{C}$=GLGR-SGFORMER; $\mathrm{D}$=GLGR-CHEBNET; $\mathrm{E}$=GLGR-BERNNET.}
\end{table*}

\subsection{Comparison with Other Representative Methods}
As a broader contextual reference, Table~\ref{tab:unified_large_comparison} compares representative non-GLGR methods, including MLP \cite{lecun1998gradient}, GAT \cite{velivckovic2017graph}, H2GCN \cite{zhu2020beyond}, HopGNN \cite{chen2023hopgnn}, Transformer \cite{dwivedi2021generalization}, GraphGPS \cite{rampasek2022recipe}, NodeFormer \cite{wu2022nodeformer}, NAGphormer \cite{chen2023nagphormer}, and PolyFormer \cite{ma2024polyformer}, with the strongest GLGR-enhanced result obtained on each dataset; the superscript in the GLGR row identifies the corresponding backbone.

Because these methods come from different architectural families and experimental settings, we use this table only as supplementary evidence. Even so, the strongest GLGR-enhanced result is best or second-best on most datasets and remains close to the leading result on the others, which indicates that the proposed operator stays competitive beyond matched backbone-level comparisons.

\section{Conclusion}

In this paper, we introduced the GLGR ($\mathbf{Q}_{\alpha, l}$), a compact linear operator family for graph representation in GSP and GNNs. By replacing the non-compact parameter coupling in prior formulations with a bounded two-parameter space, GLGR brings major classical operators and transition-type operators into the same learnable domain. We showed that this formulation is not only algebraically unified but also theoretically interpretable: the associated quadratic form yields a variational decomposition between smoothness and global energy, the spectrum remains stably controlled under graph perturbations, and graph-aware sufficient conditions can be derived for positive semi-definiteness. We further developed AG-Conv, which learns the propagation operator itself within end-to-end graph models. Experiments at both the representation level and the learning level show that GLGR improves operator selection and enhances multiple GNN backbones across graph classification and node classification benchmarks. More broadly, GLGR reframes graph operator design as optimization over a compact and analyzable operator space, providing a principled connection between spectral graph theory and adaptive graph learning. Future work will study directed and temporal extensions, together with richer analyses of learned operator trajectories and spectral responses.

\balance
\bibliographystyle{IEEEtran}
\bibliography{ref}

@article{ortega2018graph,
  title={Graph signal processing: {O}verview, challenges, and applications},
  author={Ortega, Antonio and Frossard, Pascal and Kova{\v{c}}evi{\'c}, Jelena and Moura, Jos{\'e} MF and Vandergheynst, Pierre},
  journal={Proc. IEEE},
  volume={106},
  number={5},
  pages={808--828},
  year={May 2018},
  publisher={IEEE}
}

@article{shuman2013emerging,
  title={The emerging field of signal processing on graphs: {E}xtending high-dimensional data analysis to networks and other irregular domains},
  author={Shuman, David I and Narang, Sunil K and Frossard, Pascal and Ortega, Antonio and Vandergheynst, Pierre},
  journal={IEEE Signal Process. Mag.},
  volume={30},
  number={3},
  pages={83--98},
  year={May 2013},
  publisher={IEEE}
}

@article{Sandryhaila2013,
  title={Discrete signal processing on graphs},
  author={Sandryhaila, Aliaksei and Moura, Jos{\'e} M.~F.},
  journal={IEEE Trans. Signal Process.},
  volume={61},
  number={7},
  pages={1644--1656},
  year={Apr. 2013},
  publisher={IEEE}
}

@article{Nikiforov2017,
  title={Merging the {A}- and {Q}-spectral theories},
  author={Nikiforov, Vladimir},
  journal={Appl. Anal. Discrete Math.},
  volume={11},
  number={1},
  pages={81--107},
  year={2017}
}

@article{Wang2020,
  title={Bounds for the largest and the smallest {$A_{\alpha}$} eigenvalues of a graph in terms of vertex degrees},
  author={Wang, Sai and Wong, Dein and Tian, Fenglei},
  journal={Linear Algebra Appl.},
  volume={590},
  pages={210--223},
  year={2020},
  publisher={Elsevier}
}

@article{Averty2024,
  title={A New Family of Graph Representation Matrices: Application to Graph and Signal Classification},
  author={Averty, T and Boudraa, AO and Dar{\'e}-Emzivat, D},
  journal={IEEE Signal Process. Lett.},
  volume={31},
  pages={2935--2939},
  year={Oct. 2024},
  publisher={IEEE}
}

@article{isufi2024graph,
  title={Graph filters for signal processing and machine learning on graphs},
  author={Isufi, Elvin and Gama, Fernando and Shuman, David I and Segarra, Santiago},
  journal={IEEE Trans. Signal Process.},
  volume={72},
  pages={4745--4781},
  year={2024},
  publisher={IEEE}
}

@article{morgenstern2024theoretical,
  title={Theoretical guarantees for sparse graph signal recovery},
  author={Morgenstern, Gal and Routtenberg, Tirza},
  journal={IEEE Signal Process. Lett.},
  volume={32},
  pages={266--270},
  year={2025},
  publisher={IEEE}
}

@article{buchnik2024gsp,
  title={{GSP-KalmanNet}: {T}racking graph signals via neural-aided {K}alman filtering},
  author={Buchnik, Itay and Sagi, Guy and Leinwand, Nimrod and Loya, Yuval and Shlezinger, Nir and Routtenberg, Tirza},
  journal={IEEE Trans. Signal Process.},
  volume={72},
  pages={3700--3716},
  year={2024},
  publisher={IEEE}
}

@article{castro2024gegenbauer,
  title={{G}egenbauer graph neural networks for time-varying signal reconstruction},
  author={Castro-Correa, Jhon A and Giraldo, Jhony H and Badiey, Mohsen and Malliaros, Fragkiskos D},
  journal={IEEE Trans. Neural Netw. Learn. Syst.},
  volume={35},
  number={9},
  pages={11734--11745},
  year={Sep. 2024},
  publisher={IEEE}
}

@inproceedings{smola2003kernels,
  title={Kernels and regularization on graphs},
  author={Smola, Alexander J and Kondor, Risi},
  booktitle={Proc. Learn. Theory Kernel Mach. (COLT)},
  pages={144--158},
  year={2003},
  organization={Springer}
}

@article{wang2016trend,
  title={Trend filtering on graphs},
  author={Wang, Yu-Xiang and Sharpnack, James and Smola, Alexander J and Tibshirani, Ryan J},
  journal={J. Mach. Learn. Res.},
  volume={17},
  number={105},
  pages={1--41},
  year={2016}
}

@inproceedings{kalofolias2016learn,
  title={How to learn a graph from smooth signals},
  author={Kalofolias, Vassilis},
  booktitle={Proc. Int. Conf. Artif. Intell. Statist. (AISTATS)},
  pages={920--929},
  year={2016}
}

@article{dong2020graph,
  title={Graph signal processing for machine learning: {A} review and new perspectives},
  author={Dong, Xiaowen and Thanou, Dorina and Toni, Laura and Bronstein, Michael and Frossard, Pascal},
  journal={IEEE Signal Process. Mag.},
  volume={37},
  number={6},
  pages={117--127},
  year={Nov. 2020},
  publisher={IEEE}
}

@article{song2024graph,
  title={Graph signal processing based nonlinear {QSAR}/{QSPR} model learning for compounds},
  author={Song, Xiaoying and Wen, Gaoya and Chai, Li},
  journal={Biomed. Signal Process. Control},
  volume={91},
  pages={106011},
  year={2024},
  publisher={Elsevier}
}

@article{bronstein2017geometric,
  title={Geometric deep learning: {G}oing beyond {E}uclidean data},
  author={Bronstein, Michael M and Bruna, Joan and LeCun, Yann and Szlam, Arthur and Vandergheynst, Pierre},
  journal={IEEE Signal Process. Mag.},
  volume={34},
  number={4},
  pages={18--42},
  year={Jul. 2017},
  publisher={IEEE}
}

@inproceedings{kipf2017semi,
  title={Semi-supervised classification with graph convolutional networks},
  author={Kipf, Thomas N and Welling, Max},
  booktitle={Proc. Int. Conf. Learn. Represent. (ICLR)},
  year={2017}
}

@article{wu2020comprehensive,
  title={A comprehensive survey on graph neural networks},
  author={Wu, Zonghan and Pan, Shirui and Chen, Fengwen and Long, Guodong and Zhang, Chengqi and Yu, Philip S},
  journal={IEEE Trans. Neural Netw. Learn. Syst.},
  volume={32},
  number={1},
  pages={4--24},
  year={Jan. 2021},
  publisher={IEEE}
}

@inproceedings{li2018deeper,
  title={Deeper insights into graph convolutional networks for semi-supervised learning},
  author={Li, Qimai and Han, Zhichao and Wu, Xiao-Ming},
  booktitle={Proc. AAAI Conf. Artif. Intell.},
  volume={32},
  number={1},
  year={2018}
}

@inproceedings{oono2020graph,
  title={Graph neural networks exponentially lose expressive power for node classification},
  author={Oono, Kenta and Suzuki, Taiji},
  booktitle={Proc. Int. Conf. Learn. Represent. (ICLR)},
  year={2020}
}

@article{nt2019revisiting,
  title={Revisiting graph neural networks: {A}ll we have is low-pass filters},
  author={Nt, Hoang and Maehara, Takanori},
  journal={arXiv preprint arXiv:1905.09550},
  year={2019}
}

@inproceedings{dwivedi2021generalization,
  title={A generalization of transformer networks to graphs},
  author={Dwivedi, Vijay Prakash and Bresson, Xavier},
  booktitle={AAAI Workshop on Deep Learning on Graphs: Methods and Applications},
  year={2021}
}

@inproceedings{ying2021do,
  title={Do {T}ransformers really perform badly for graph representation?},
  author={Ying, Chengxuan and Cai, Tianle and Luo, Shengjie and Zheng, Shuxin and Ke, Guolin and He, Di and Shen, Yanming and Liu, TieYan},
  booktitle={Proc. Adv. Neural Inf. Process. Syst. (NeurIPS)},
  volume={34},
  pages={28877--28888},
  year={2021}
}

@inproceedings{rampasek2022recipe,
  title={Recipe for a general, powerful, scalable graph {T}ransformer},
  author={Ramp{\'a}{\v{s}}ek, Ladislav and Galkin, Michael and Dwivedi, Vijay Prakash and Luu, Anh Tuan and Wolf, Guy and Beaini, Dominique},
  booktitle={Proc. Adv. Neural Inf. Process. Syst. (NeurIPS)},
  volume={35},
  pages={14501--14515},
  year={2022}
}

@inproceedings{wu2022nodeformer,
  title={{NodeFormer}: {A} scalable graph structure learning {T}ransformer for node classification},
  author={Wu, Qitian and Zhao, Wentao and Li, Zenan and Wipf, David P and Yan, Junchi},
  booktitle={Proc. Adv. Neural Inf. Process. Syst. (NeurIPS)},
  volume={35},
  pages={27387--27401},
  year={2022}
}

@article{wang2024graphmamba,
  title={{G}raph-{M}amba: {T}owards long-range graph sequence modeling with selective state spaces},
  author={Wang, Chloe and Tsepa, Oleksii and Ma, Jun and Wang, Bo},
  journal={arXiv preprint arXiv:2402.00789},
  year={2024}
}

@inproceedings{behrouz2024graphmamba,
  title={{G}raph {M}amba: {T}owards learning on graphs with state space models},
  author={Behrouz, Ali and Hashemi, Farnoosh},
  booktitle={Proc. ACM SIGKDD Int. Conf. Knowl. Discovery Data Mining},
  pages={119--130},
  year={2024}
}

@inproceedings{barnard1995spectral,
  title={A spectral algorithm for envelope reduction of sparse matrices},
  author={Barnard, Stephen T and Pothen, Alex and Simon, Horst D},
  booktitle={Proc. ACM/IEEE Conf. Supercomput.},
  pages={493--502},
  year={1993}
}

@article{belkin2003laplacian,
  title={Laplacian eigenmaps for dimensionality reduction and data representation},
  author={Belkin, Mikhail and Niyogi, Partha},
  journal={Neural Comput.},
  volume={15},
  number={6},
  pages={1373--1396},
  year={2003},
  publisher={MIT Press}
}

@article{kartal2021graph,
  title={Graph signal processing: {V}ertex multiplication},
  author={Kartal, Bunyamin and Bayiz, Yigit E and Ko{\c{c}}, Aykut},
  journal={IEEE Signal Process. Lett.},
  volume={28},
  pages={1270--1274},
  year={2021},
  publisher={IEEE}
}

@article{vandeville2017when,
  title={When {S}lepian meets {F}iedler: {P}utting a focus on the graph spectrum},
  author={Van De Ville, Dimitri and Demesmaeker, Robin and Preti, Maria Giulia},
  journal={IEEE Signal Process. Lett.},
  volume={24},
  number={7},
  pages={1001--1004},
  year={Jul. 2017},
  publisher={IEEE}
}

@article{dwivedi2023benchmarking,
  title={Benchmarking graph neural networks},
  author={Dwivedi, Vijay Prakash and Joshi, Chaitanya K and Luu, Anh Tuan and Laurent, Thomas and Bengio, Yoshua and Bresson, Xavier},
  journal={J. Mach. Learn. Res.},
  volume={24},
  number={43},
  pages={1--48},
  year={2023}
}

@inproceedings{velivckovic2017graph,
  title={Graph attention networks},
  author={Veli{\v{c}}kovi{\'c}, Petar and Cucurull, Guillem and Casanova, Arantxa and Romero, Adriana and Lio, Pietro and Bengio, Yoshua},
  booktitle={Proc. Int. Conf. Learn. Represent. (ICLR)},
  year={2018}
}

@inproceedings{wu2019simplifying,
  title={Simplifying graph convolutional networks},
  author={Wu, Felix and Zhang, Tianyi and Souza, Amauri and Fifty, Christopher and Yu, Tao and Weinberger, Kilian Q.},
  booktitle={Proc. Int. Conf. Mach. Learn. (ICML)},
  pages={6861--6871},
  year={2019}
}

@inproceedings{defferrard2016convolutional,
  title={Convolutional neural networks on graphs with fast localized spectral filtering},
  author={Defferrard, Micha{\"e}l and Bresson, Xavier and Vandergheynst, Pierre},
  booktitle={Proc. Adv. Neural Inf. Process. Syst. (NeurIPS)},
  volume={29},
  year={2016}
}

@inproceedings{klicpera2019predict,
  title={Predict then propagate: Graph neural networks meet personalized {P}age{R}ank},
  author={Gasteiger, Johannes and Bojchevski, Aleksandar and G{\"u}nnemann, Stephan},
  booktitle={Proc. Int. Conf. Learn. Represent. (ICLR)},
  year={2019}
}

@inproceedings{chien2021adaptive,
  title={Adaptive universal generalized {P}age{R}ank graph neural network},
  author={Chien, Eli and Peng, Jianhao and Li, Pan and Milenkovic, Olgica},
  booktitle={Proc. Int. Conf. Learn. Represent. (ICLR)},
  year={2021}
}

@inproceedings{he2021bernnet,
  title={{BernNet}: Learning arbitrary graph spectral filters via {B}ernstein approximation},
  author={He, Mingguo and Wei, Zhewei and Huang, Zengfeng and Xu, Hongteng},
  booktitle={Proc. Adv. Neural Inf. Process. Syst. (NeurIPS)},
  volume={34},
  pages={14239--14251},
  year={2021}
}

@inproceedings{wang2022powerful,
  title={How powerful are spectral graph neural networks},
  author={Wang, Xiyuan and Zhang, Muhan},
  booktitle={Proc. Int. Conf. Mach. Learn. (ICML)},
  pages={23341--23362},
  year={2022}
}

@inproceedings{wu2023sgformer,
  title={{SGFormer}: Simplifying and empowering transformers for large-graph representations},
  author={Wu, Qitian and Zhao, Wentao and Yang, Chenxiao and Zhang, Hengrui and Nie, Fan and Jiang, Haitian and Bian, Yatao and Yan, Junchi},
  booktitle={Proc. Adv. Neural Inf. Process. Syst. (NeurIPS)},
  volume={36},
  pages={64753--64773},
  year={2023}
}

@inproceedings{zhu2020beyond,
  title={Beyond homophily in graph neural networks: Current limitations and effective designs},
  author={Zhu, Jiong and Yan, Yanqiao and Zhao, Lingxiao and Heimann, Mark and Akoglu, Leman and Koutra, Danai},
  booktitle={Proc. Adv. Neural Inf. Process. Syst. (NeurIPS)},
  volume={33},
  pages={7793--7804},
  year={2020}
}

@inproceedings{chen2023hopgnn,
  title={From node interaction to hop interaction: New effective and scalable graph learning paradigm},
  author={Chen, Jie and Li, Zilong and Zhu, Yin and Zhang, Junping and Pu, Jian},
  booktitle={Proc. IEEE/CVF Conf. Comput. Vis. Pattern Recognit. (CVPR)},
  pages={7876--7885},
  year={2023}
}

@inproceedings{chen2023nagphormer,
  title={{NAGphormer}: A tokenized graph transformer for node classification in large graphs},
  author={Chen, Jinsong and Gao, Kaiyuan and Li, Gaichao and He, Kun},
  booktitle={Proc. Int. Conf. Learn. Represent. (ICLR)},
  year={2023}
}

@inproceedings{ma2024polyformer,
  title={{PolyFormer}: Scalable node-wise filters via polynomial graph transformer},
  author={Ma, Jiahong and He, Mingguo and Wei, Zhewei},
  booktitle={Proc. ACM SIGKDD Int. Conf. Knowl. Discovery Data Mining (KDD)},
  pages={2118--2129},
  year={2024}
}

@inproceedings{morris2020tudataset,
  title={{TUDataset}: A collection of benchmark datasets for learning with graphs},
  author={Morris, Christopher and Kriege, Nils M. and Bause, Franka and Kersting, Kristian and Mutzel, Petra and Neumann, Marion},
  booktitle={ICML 2020 Workshop on Graph Representation Learning and Beyond},
  year={2020}
}

@inproceedings{yang2016revisiting,
  title={Revisiting semi-supervised learning with graph embeddings},
  author={Yang, Zhilin and Cohen, William W. and Salakhutdinov, Ruslan},
  booktitle={Proc. Int. Conf. Mach. Learn. (ICML)},
  pages={40--48},
  year={2016}
}

@article{shchur2018pitfalls,
  title={Pitfalls of graph neural network evaluation},
  author={Shchur, Oleksandr and Mumme, Maximilian and Bojchevski, Aleksandar and G{\"u}nnemann, Stephan},
  journal={arXiv preprint arXiv:1811.05868},
  year={2018}
}

@inproceedings{mernyei2020wikics,
  title={{Wiki-CS}: A {W}ikipedia-based benchmark for graph neural networks},
  author={Mernyei, P{\'e}ter and Cangea, C{\u{a}}t{\u{a}}lina},
  booktitle={ICML 2020 Workshop on Graph Representation Learning and Beyond},
  year={2020}
}

@inproceedings{pei2020geom,
  title={{Geom-GCN}: Geometric graph convolutional networks},
  author={Pei, Hongbin and Wei, Bingzhe and Chang, Kevin Chen-Chuan and Lei, Yu and Yang, Bo},
  booktitle={Proc. Int. Conf. Learn. Represent. (ICLR)},
  year={2020}
}

@book{penrose2003random,
  title={Random Geometric Graphs},
  author={Penrose, Mathew},
  volume={5},
  year={2003},
  publisher={OUP Oxford}
}

@article{thanou2017learning,
  title={Learning heat diffusion graphs},
  author={Thanou, Dorina and Dong, Xiaowen and Kressner, Daniel and Frossard, Pascal},
  journal={IEEE Trans. Signal Inf. Process. Netw.},
  volume={3},
  number={3},
  pages={484--499},
  year={Sep. 2017},
  publisher={IEEE}
}

@article{lecun1998gradient,
  title={Gradient-based learning applied to document recognition},
  author={LeCun, Yann and Bottou, L{\'e}on and Bengio, Yoshua and Haffner, Patrick},
  journal={Proc. IEEE},
  volume={86},
  number={11},
  pages={2278--2324},
  year={Nov. 1998}
}

\end{document}